\theoremstyle{plain}
\newtheorem{theorem}{Theorem}
\newtheorem{lemma}{Lemma}
\def\T{{ \mathrm{\scriptscriptstyle T} }}
\def\be{\begin{equation}}
\def\ee{\end{equation}}
\def\ben{\begin{equation*}}
\def\een{\end{equation*}}
\def\bea{\begin{eqnarray}}
\def\eea{\end{eqnarray}}
\def\bean{\begin{eqnarray*}}
\def\eean{\end{eqnarray*}}
\def\sgn{\mathrm{sign}}
\def\bv{\boldsymbol{v}}
\def\bX{\boldsymbol{X}}
\def\bZ{\boldsymbol{Z}}
\def\bB{\boldsymbol{B}}
\def\bI{\boldsymbol{I}}
\def\bS{\boldsymbol{S}}
\def\bzero{\boldsymbol{0}}
\def\bSigma{\boldsymbol{\Sigma}}
\def\bDelta{\boldsymbol{\Delta}}
\def\bTheta{\boldsymbol{\Theta}}
\def\bLambda{\boldsymbol{\Lambda}}
\begin{document}

\title{Positive Definite $\ell_1$ Penalized Estimation of Large Covariance Matrices}
\author{Lingzhou Xue, Shiqian Ma and Hui Zou\\University of Minnesota}
\date{December 13, 2011\\ Accepted by JASA, August 2012}
\maketitle

\begin{abstract}
The thresholding covariance estimator has nice asymptotic properties for estimating sparse large covariance matrices, but it often has negative eigenvalues when used in real data analysis. To simultaneously achieve sparsity and positive definiteness, we develop a positive definite $\ell_1$-penalized covariance estimator for estimating sparse large covariance matrices. An efficient alternating direction method is derived to solve the challenging optimization problem and its convergence properties are established. Under weak regularity conditions, non-asymptotic statistical theory is also established for the proposed estimator. The competitive finite-sample performance of our proposal is demonstrated by both simulation and real applications.
\end{abstract}

\noindent {\textbf{Keywords}: Alternating direction methods;
Large covariance matrices;  Matrix norm; Positive-definite estimation; Sparsity; Soft-thresholding.

\section{Introduction}

Estimating covariance matrices is of fundamental importance for an abundance of statistical methodologies.
Nowadays, the advance of new technologies has brought massive high-dimensional data into various research fields, such as fMRI imaging,
web mining, bioinformatics, climate studies and risk management, and so on.
The usual sample covariance matrix is optimal in the classical setting with large samples and fixed low dimensions \citep{anderson1984}, but it
performs very poorly in the high-dimensional setting \citep*{johnstone2001}. In the recent literature, regularization
techniques have been used to improve the sample covariance matrix estimator, including banding \citep*{wu2003,bickel2008a},
tapering \citep*{furrer2007,czz2010} and thresholding \citep*{bickel2008b,karoui2008,rothman2009}.
Banding or tapering is very useful when the variables have a natural ordering and off-diagonal entries of the target
covariance matrix decays to zero as they move away from the diagonal.
On the other hand, thresholding is proposed for estimating permutation-invariant covariance matrices. Thresholding can be used to produce
consistent covariance matrix estimators when the true covariance matrix is bandable \citep{bickel2008b,cz2011}.
In this sense, thresholding is more robust than banding/tapering for real applications.

Let $\hat\bSigma_n=(\hat\sigma_{ij})_{1\le i,j\le p}$ be the sample covariance matrix. \cite*{rothman2009} defined the general thresholding covariance matrix estimator as
$
\hat\bSigma_{thr}=\{s_{\lambda}(\hat \sigma_{ij})\}_{1\le i,j\le p},
$
where $s_{\lambda}(z)$ is the generalized thresholding function. The generalized
thresholding function covers a number of commonly used shrinkage procedures, e.g.
the hard thresholding $s_{\lambda}(z)=zI_{\{|z|>\lambda\}}$,
the soft thresholding $s_{\lambda}(z)=\textrm{sign}(z)(|z|-\lambda)_+$,
the smoothly clipped absolute deviation thresholding \citep{scad}
and the adaptive lasso thresholding \citep{zou2006}.
Consistency results and explicit rates of convergence have been obtained for these regularized estimators
in the literature, e.g. \cite*{bickel2008a,bickel2008b}, \cite{karoui2008}, \cite{rothman2009}, \cite{cai2011}.
The recent work by \cite{cz2011} has established the minimax rate of convergence
under the $\ell_1$ matrix norm over a fairly wide range of classes of large covariance matrices, where
the thresholding estimator is shown to be minimax rate optimal. The existing theoretical and empirical results show
no clear favoritism to a particular thresholding rule. In this paper we focus on the soft-thresholding because it can be formulated
as the solution of a convex optimization problem. Let $\|\cdot\|_F$ be the Frobenius norm and $|\cdot|_1$ be the element-wise $\ell_1$-norm of all non-diagonal elements. Then the soft-thresholding covariance estimator is equal to
\be\label{soft-thresholding}
\hat\bSigma=\arg\min_{\bSigma}~\frac{1}2\|\bSigma-\hat\bSigma_n\|_F^2+\lambda|\bSigma|_1.
\ee

However, there is no guarantee that the thresholding estimator is always positive definite. Although the positive definite property is guaranteed in the asymptotic setting with high probability, the actual estimator can be an indefinite matrix, especially in real data analysis. To illustrate this issue, we consider the Michigan lung cancer gene-expression data \citep{beer2002} which have $86$ tumor samples from patients with lung adenocarcinomas and $5217$ gene expression values for each sample. More details about this dataset are referred to \cite{beer2002} and \cite{gsea2005}. We randomly choose $p$ genes ($p=200, 500$), and obtain the soft-thresholding sample correlation matrix for these genes. We repeat the process ten times for $p=200$ and $500$ respectively, and each time the thresholding parameter $\lambda$ is selected via the 5-fold cross validation. We found that none of the soft-thresholding estimators would become positive definite for both $p=200$ and $500$. On average, there exist $22$ and $124$ negative eigenvalues for the soft-thresholding estimator for $p=200$ and $p=500$, respectively. Figure \ref{plot:michigan_data} displays the $30$ smallest eigenvalues for $p=200$ and the $130$ smallest eigenvalues for $p=500$.

\begin{figure}[!ht]
\centering
\small{
\begin{minipage}[t]{0.4\linewidth} \centering
\includegraphics[width=\textwidth]{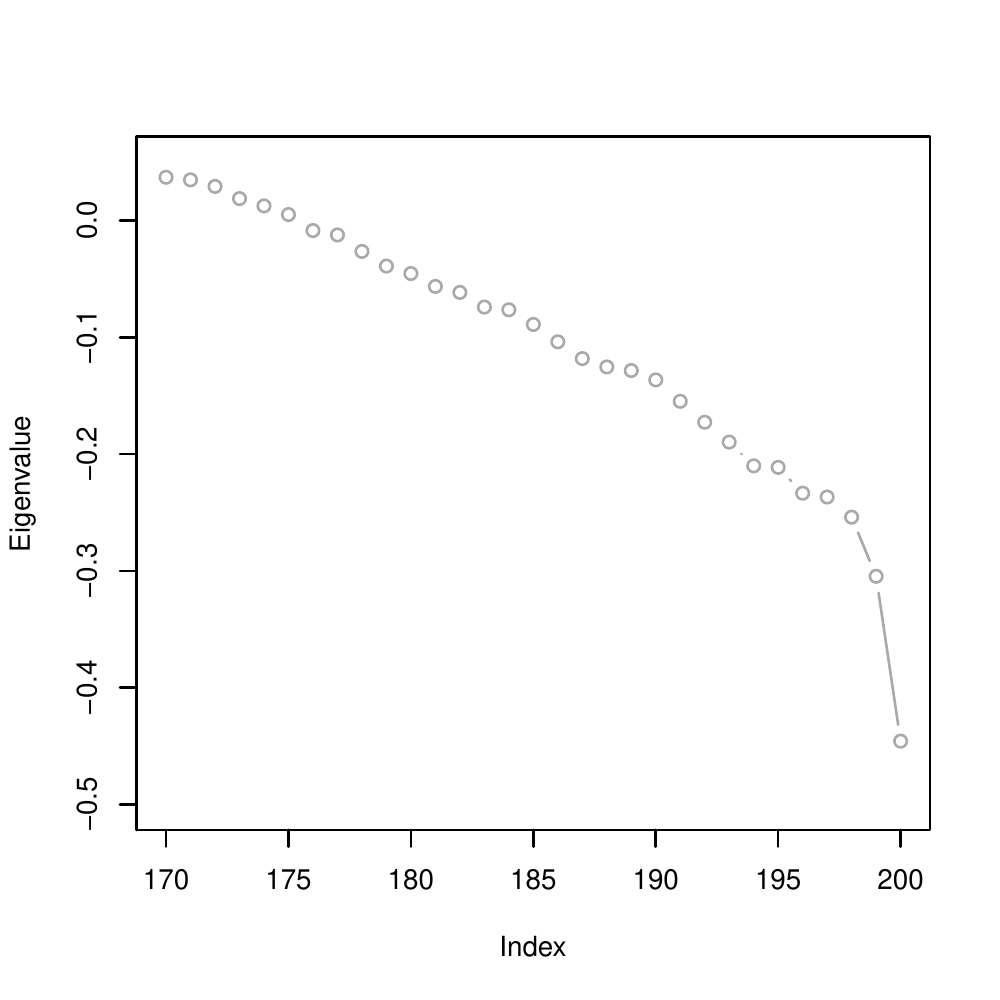}
{\small (A) $p=200$: the minimal $30$ eigenvalues}
\end{minipage}
\begin{minipage}[t]{0.4\linewidth} \centering
\includegraphics[width=\textwidth]{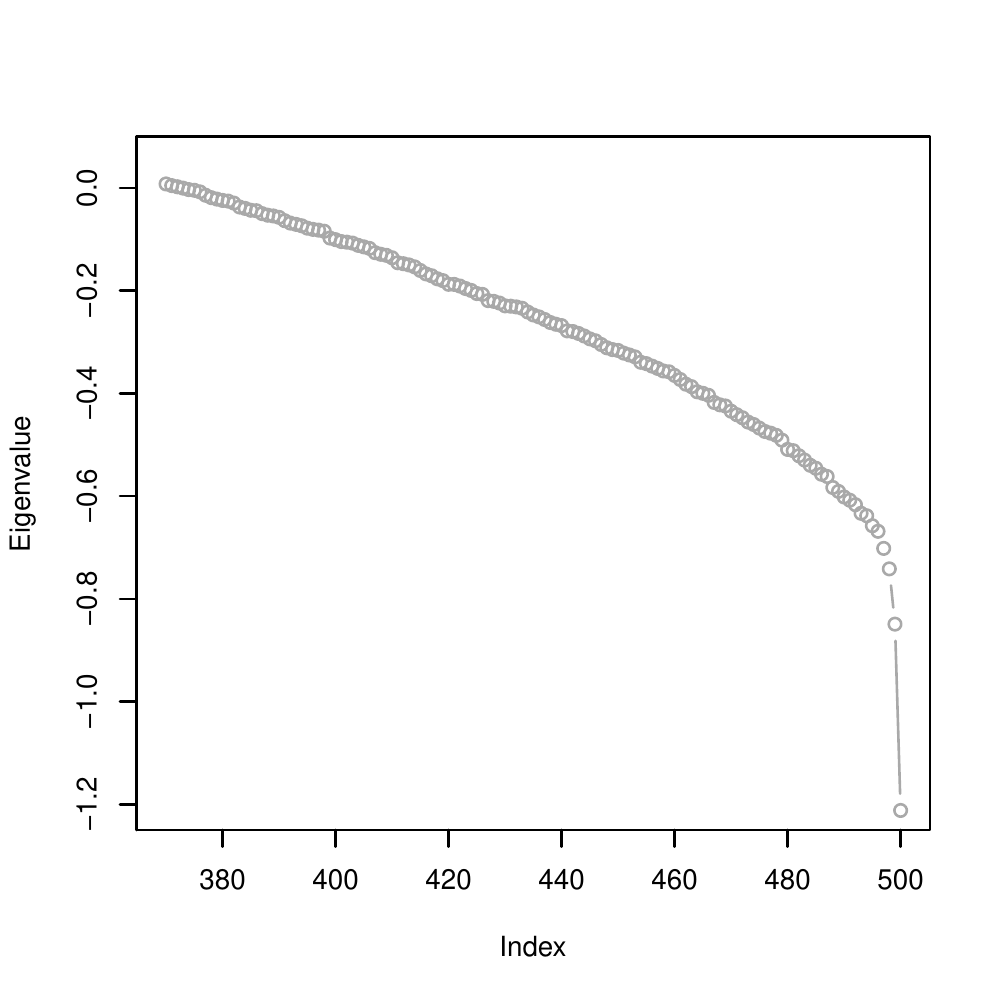}
{\small (B) $p=500$: the minimal $130$ eigenvalues}
\end{minipage}
\caption{Illustration of the indefinite soft-thresholding estimator in the Michigan lung cancer data}
}\label{plot:michigan_data}
\end{figure}

To deal with the indefiniteness, one possible solution is to utilize the eigen-decomposition of $\hat\bSigma$, and project $\hat\bSigma$ into the convex cone $\{\bSigma\succeq 0\}$. Assume that $\hat\bSigma$ has the eigen-decomposition $\hat\bSigma=\sum_{i=1}^p\hat\lambda_i\bv_i^\T\bv_i$, and then a positive semidefinite estimator $\tilde\bSigma^+$ can be obtained by setting $\tilde\bSigma^+=\sum_{i=1}^p \max(\hat\lambda_i,0)\bv_i^\T\bv_i$. However, this strategy does not work well for sparse covariance matrix estimation, because the projection destroys the sparsity pattern of $\hat\bSigma$. Consider the Michigan data again. After semidefinite projection, the soft-thresholding estimator has no zero entry.

In order to simultaneously achieve sparsity and positive semidefiniteness, a natural solution is to add the positive semidefinite constraint to (\ref{soft-thresholding}). Consider the following constrained $\ell_1$ penalization problem
\be\label{soft-thresholding.plus0}
\hat\bSigma^{+}=\arg\min_{\bSigma\succeq 0}~\|\bSigma-\hat\bSigma_n\|_F^2/2+\lambda|\bSigma|_1.
\ee
Note that the solution to (\ref{soft-thresholding.plus0}) could be positive semidefinite. To obtain a positive definite covariance estimator, we can consider the positive definite constraint $\{\bSigma\succeq \epsilon\bI\}$ for some arbitrarily small $\epsilon>0$. Then the modified $\hat \bSigma^{+}$ is always positive definite. In this work, we focus on solving the positive definite $\hat \bSigma^{+}$ as follows
\be\label{soft-thresholding.plus}
\hat\bSigma^{+}=\arg\min_{\bSigma\succeq \epsilon\bI}~\|\bSigma-\hat\bSigma_n\|_F^2/2+\lambda|\bSigma|_1.
\ee

Despite its natural motivation, (\ref{soft-thresholding.plus}) is actually a very challenging optimization problem due to the positive semidefinite constraint. To our best knowledge, the first attempt for solving (\ref{soft-thresholding.plus}) was recently proposed by \cite{rothman2011} who added the log-determinant barrier function to (\ref{soft-thresholding.plus}):
\be\label{rothman}
\breve\bSigma^{+}=\arg\min_{\bSigma\succ 0}~\|\bSigma-\hat\bSigma_n\|_F^2/2-\tau\log \det(\bSigma)+\lambda|\bSigma|_1,
\ee
where the barrier parameter $\tau$ is a small positive constant, say $10^{-4}$.
From the optimization viewpoint, (\ref{rothman}) is similar to the graphical lasso criterion \citep{glasso} which also has a log-determinant part and the element-wise $\ell_1$-penalty.
\cite{rothman2011} derived an iterative procedure to solve (\ref{rothman}) . \cite{rothman2011}'s proposal is based on heuristic arguments and its convergence property is unknown.

In this paper we present an alternating direction algorithm for solving (\ref{soft-thresholding.plus}) directly. Numerical examples show that our algorithm is much faster than the log-barrier method.
We further prove the convergence properties of our algorithm and discuss the statistical properties of the positive-definite constrained $\ell_1$ penalized covariance estimator.

\section{Alternating Direction Algorithm}
We use an alternating direction method to solve (\ref{soft-thresholding.plus}) directly.
The alternating direction method is closely related to the operator-splitting method that has a long history back to 1950s for solving numerical partial differential equations, see e.g., \cite{Douglas-Rachford-56,Peaceman-Rachford-55}. Recently, the alternating direction method has been revisited and successfully applied to solving large scale problems arising from different applications. For example, \cite*{ma2010nips} introduced the alternating linearization methods to efficiently solve the graphical lasso optimization problem. We refer to \cite{fortin1983,glowinski1989} for more details on operator-splitting and alternating direction methods.

In the sequel, we propose an alternating direction method to solve the $\ell_1$ penalized covariance matrix estimation problem (\ref{soft-thresholding.plus}) under the positive-semidefinite constraint. We first introduce a new variable $\bTheta$ and an equality constraint as follows
\be\label{soft-thresholding.alm}
(\hat\bTheta^{+},\hat\bSigma^{+})=\arg\min_{\bTheta,\bSigma}~\{\|\bSigma-\hat\bSigma_n\|_F^2/2+\lambda|\bSigma|_1:
~\bSigma=\bTheta,~\bTheta\succeq \epsilon\bI\}.
\ee
The solution to (\ref{soft-thresholding.alm}) gives the solution to (\ref{soft-thresholding.plus}).
To deal with the equality constraint in (\ref{soft-thresholding.alm}), we shall minimize its augmented Lagrangian function
for some given penalty parameter $\mu$, i.e.
\be\label{AugLagFunction}
L(\bTheta,\bSigma;\bLambda)
=
\|\bSigma-\hat\bSigma_n\|_F^2/2+\lambda|\bSigma|_1
-\langle \bLambda, \bTheta-\bSigma\rangle+\|\bTheta-\bSigma\|_F^2/(2\mu),
\ee where $\bLambda$ is the Lagrange multiplier.
We iteratively solve \be \label{subproblem1}
(\bTheta^{i+1},\bSigma^{i+1})=\arg\min L(\bTheta,\bSigma;\bLambda^i)\ee and then
update the Lagrangian multiplier $\bLambda^{i+1}$ by
$$
\bLambda^{i+1}=\bLambda^i-(\bTheta^{i+1}-\bSigma^{i+1})/\mu.
$$
For (\ref{subproblem1}) we do it by alternatingly minimizing
$L(\bTheta,\bSigma;\bLambda^i)$ with respect to $\bTheta$ and $\bSigma$.

To sum up, the entire algorithm proceeds as follows:
\begin{description}
\item[] For $i=0,1,2,\ldots$, solve the following three sub-problems sequentially till convergence
\be
\bTheta \ \mathrm{step:} \quad \bTheta^{i+1} = \arg\min_{\bTheta\succeq \epsilon\bI} L(\bTheta,\bSigma^{i};\bLambda^i) \label{alg:ADM-1}
\ee
\be
\bSigma \ \mathrm{step:} \quad \bSigma^{i+1} = \arg\min_{\bSigma} L(\bTheta^{i+1},\bSigma;\bLambda^i) \label{alg:ADM-2}
\ee
\be
\bLambda \ \mathrm{step:} \quad \bLambda^{i+1}=\bLambda^i-(\bTheta^{i+1}-\bSigma^{i+1})/\mu. \label{alg:ADM-3}
\ee
\end{description}
To further simplify the alternating direction algorithm, we derive the closed-form solutions for (\ref{alg:ADM-1})--(\ref{alg:ADM-2}). Consider the $\bTheta$ step. Define $(\bZ)_{+}$
as the projection of a matrix $\bZ$ onto the convex cone $\{\bTheta\succeq \epsilon\bI\}$.
Assume that $\bZ$ has the eigen-decomposition $\sum_{i=1}^p\lambda_i\bv_i^\T\bv_i$, and then $(\bZ)_{+}$ can be obtained as $\sum_{i=1}^p \max(\lambda_i,\epsilon)\bv_i^\T\bv_i$. Then the $\bTheta$ step can be analytically solved as follows
\bean
\bTheta^{i+1}
&=& \arg\min_{\bTheta\succeq \epsilon\bI} L(\bTheta,\bSigma^{i};\bLambda^i) \\
&=& \arg\min_{\bTheta\succeq \epsilon\bI} -\langle\bLambda^i,\bTheta\rangle+\|\bTheta-\bSigma^{i}\|_F^2/(2\mu)\\
&=& (\bSigma^{i}+\mu\bLambda^i)_{+}.
\eean

Next, define an entry-wise soft-thresholding rule for all the non-diagonal elements of a matrix $\bZ$ as $\bS(\bZ,\tau)=\{s(z_{ij},\tau)\}_{1\le i,j\le p}$  with
\[
s(z_{ij},\tau) = \sgn(z_{ij})\max(|z_{ij}|-\tau,0)I_{\{i\neq j\}}+z_{ij}I_{\{i=j\}}.
\]
Then  the $\bSigma$ step  has  a closed-form solution given below
\bean
\bSigma^{i+1}
&=& \arg\min_{\bSigma} L(\bTheta^{i+1},\bSigma;\bLambda^i) \\
&=& \arg\min_{\bSigma} \|\bSigma-\hat\bSigma_n\|_F^2/2+\lambda|\bSigma|_1
+\langle\bLambda^i,\bSigma\rangle+\|\bSigma-\bTheta^{i+1}\|_F^2/(2\mu)\\
&=& \{\bS(\mu(\hat\bSigma_n-\bLambda^i)+\bTheta^{i+1}, \lambda\mu)\}/(1+\mu).
\eean
Algorithm 1 shows the complete details of our alternating direction method for (\ref{soft-thresholding.plus}).  In Section 4 we provide the convergence analysis of Algorithm 1 and prove that Algorithm 1 always converges to the optimal solution of \eqref{soft-thresholding.alm} from any starting point.

\begin{algorithm}\label{alg:ADM-1-formal}
\caption{Our alternating direction method for the $\ell_1$ penalized covariance estimator}
\begin{enumerate}
\item Input: $\mu$, $\bSigma^0$ and $\bLambda^0$.

\item Iterative alternating direction augmented Lagrangian step: for the $i$-th iteration
\begin{enumerate}
\item [2.1] Solve $\bTheta^{i+1} = (\bSigma^{i}+\mu\bLambda^i)_{+}$;
\item [2.2] Solve $\bSigma^{i+1} = \{\bS(\mu(\hat\bSigma_n-\bLambda^i)+\bTheta^{i+1}, \lambda\mu)\}/(1+\mu)$;
\item [2.3] update $\bLambda^{i+1}=\bLambda^i-(\bTheta^{i+1}-\bSigma^{i+1})/\mu$.
\end{enumerate}
\item Repeat the above cycle till convergence.
\end{enumerate}
\end{algorithm}

In our implementation we use the soft-thresholding estimator as the initial value for both $\bTheta^0$ and $\bSigma^0$, and we set $\bLambda^0$ as a zero matrix. The value for $\mu$ is 2. Before invoking Algorithm 1, we always check whether the soft-thresholding estimator is positive definite. If yes, then the soft-threhsolding estimator is the final solution to (\ref{soft-thresholding.plus}).

\section{Numerical Examples}
\subsection{Simulation}
Before delving  into theoretical analysis of the algorithm and the resulting estimator, we first use simulation to show the competitive performance of our proposal.
In all examples we standardize the variables to have zero mean and unit variance.
In each simulation model, we generated $100$ independent datasets, each with $n=50$ independent $p$-variate random vectors from the multivariate normal distribution with mean $0$ and covariance matrix $\bSigma_0=(\sigma^0_{ij})_{1\le i,j\le p}$ for $p=100, 200~\&~500$. We considered two covariance models with different sparsity patterns:
\begin{description}
  \item[Model 1:] $\sigma^0_{ij}=(1-{|i-j|}/10)_+$.
  \item[Model 2:] partition the indices $\{1,2,\ldots,p\}$ into $K=p/20$ non-overlapping subsets of equal size, and let $i_k$ denote the maximum index in $I_k$.
      \[
      \sigma^0_{ij}=0.6I_{\{i=j\}} + 0.4\sum_{k=1}^{K}I_{\{i\in I_k, j\in I_k\}}+0.4\sum_{k=1}^{K-1}(I_{\{i=i_k,j\in I_{k+1}\}}+I_{\{i\in I_{k+1}, j=i_k\}})
      \]
\end{description}
Model 1 has been used in \cite{bickel2008a} and \cite{cai2011}, and Model 2 is similar to the overlapping block diagonal design used in \cite{rothman2011}.

First, we compare the run times of our estimator $\hat\bSigma^+$ with the log-barrier estimator $\breve\bSigma^+$ by \cite{rothman2011}. As shown in Table \ref{sim:timing}, our method is much faster than the log-barrier method.

\begin{table}[!ht]
\small{
\begin{center}
\caption{Total time (in seconds) for computing a solution path with 99 thresholding parameters $\lambda=\{0.01, 0.02, \cdots, 0.99\}$. Timing was carried out on an {\tt AMD} 2.8GHz processor.}\label{sim:timing}
\vspace{0.1in}
\begin{tabular}{c|ccc|ccc}
\toprule
& \multicolumn{3}{c|}{Model 1}	 &  \multicolumn{3}{c}{Model 2} \\
$p$ & 100 & 200 & 500 & 100 & 200 & 500\\
\hline
Our method  & 9.2 &  65.2  & 1156.0  & 7.5 &  51.1  & 986.6\\
Rothman's method& 84.1 & 822.1  & 35911.8   & 51.3 &  611.1  & 32803.0\\
\bottomrule
\end{tabular}
\end{center}
}
\end{table}

In what follows, we compare the performance of $\hat\bSigma^+$, $\breve\bSigma^+$  and  the soft-thresholding estimator $\hat\bSigma$. For all three regularized estimators, the thresholding parameter was chosen by $5$-fold cross-validation \citep{bickel2008b,rothman2009,cai2011}. The estimation performance is measured by the average losses under both the Frobenius norm and the spectral norm. The selection performance is examined by the false positive rate \[\frac{\#\{(i,j):~\hat\sigma_{ij}\neq 0~\&~\sigma_{ij}= 0\}}{\#\{(i,j):~\sigma_{ij}= 0\}}\] and the true positive rate \[\frac{\#\{(i,j):~\hat\sigma_{ij}\neq 0~\&~\sigma_{ij}\neq 0\}}{\#\{(i,j):~\sigma_{ij}\neq 0\}}.\] Moreover, we compare the average number of negative eigenvalues and the percentage of positive-definiteness to check the positive-definiteness.

Table \ref{stcor:model1} and Table \ref{stcor:model2} show the average metrics over 100 replications. The soft-thresholding estimator $\hat\bSigma$ is positive definite in 19 or fewer out of 100 simulation runs, while $\hat\bSigma^+$ and $\breve\bSigma^+$ can always guarantee a positive-definite estimator. The larger the dimension, the less likely for the soft-thresholding estimator to be positive definite. In terms of estimation, both $\hat\bSigma^+$ and $\breve\bSigma^+$ are more accurate than $\hat\bSigma$. As for the selection performance, $\hat\bSigma^+$ and $\breve\bSigma^+$ achieve a slightly better true positive rate than $\hat\bSigma$. Overall, $\hat\bSigma^+$ is the best among all three regularized estimators.

\begin{table}[!ht]
\small{
\begin{center}
\caption{Comparison of the three regularized estimators for Model 1. Each metric is averaged over 100 replications with the standard error shown in the bracket. NA means that the results for $\breve\bSigma^+$ (Rothman's method) are not available due to the extremely long run times.}\label{stcor:model1}
\vspace{0.1in}
\begin{tabular}{c|cc|cc|cc}
\toprule
&Frobenius&Spectral&False   &True    &Negative   &Positive \\
&norm     &norm    &positive&positive&eigenvalues&definiteness\\
\hline
&\multicolumn{6}{c}{$p=100$}\\
\hline
soft &8.41	&4.02	&24.5	&87.6	&2.24	&\multirow{2}{*}{53/100}\\
thresholding &(0.06)&(0.04)	&(0.1) &(0.0)	&(0.14)	&\\
our &8.40 &4.02	&24.8	&87.8	&0.00	&\multirow{2}{*}{100/100}\\
method &(0.06)&(0.04)	&(0.1) &(0.0)	&(0.00)&\\
Rothman's &8.40	&4.02	&24.5	&87.7	&0.00	&\multirow{2}{*}{100/100}\\
method &(0.06)&(0.04)	&(0.1) &(0.0)	&(0.00)&\\
\hline
&\multicolumn{6}{c}{$p=200$}\\
\hline
soft &13.82	&4.70	&14.3	&83.2&3.74	&\multirow{2}{*}{23/100}\\
thresholding &(0.06)&(0.03)	&(0.4)&(0.3) &(0.22)	&\\
our &13.80	&4.69	&14.6	&83.5 &0.00	&\multirow{2}{*}{100/100}\\
method &(0.06)&(0.03)	&(0.4)&(0.3) &(0.00)&\\
Rothman's &13.81	&4.69	&14.6	&83.5 &0.00	&\multirow{2}{*}{100/100}\\
method &(0.05)&(0.03)	&(0.4)&(0.3) &(0.00)&\\
\hline
&\multicolumn{6}{c}{$p=500$}\\
\hline
soft &25.15	&5.28	&6.3	&78.1&4.64	&\multirow{2}{*}{7/100}\\
thresholding &(0.11)	&(0.04)	&(0.2) &(0.3) &(0.60)	&\\
our &25.10	&5.28	&6.5	&78.3&0.00	&\multirow{2}{*}{100/100}\\
method &(0.11)	&(0.04)	&(0.2) &(0.3) &(0.00)&\\
Rothman's &NA &NA &NA &NA &NA &\multirow{2}{*}{NA} \\
method &NA &NA &NA &NA &NA &\\
\bottomrule
\end{tabular}
\end{center}
}
\end{table}

\begin{table}[!ht]
\small{
\begin{center}
\caption{Comparison of the three regularized estimators for Model 2. Each metric is averaged over 100 replications with the standard error shown in the bracket. NA means that the results for $\breve\bSigma^+$ (Rothman's method) are not available due to the extremely long run times.}\label{stcor:model2}
\begin{tabular}{c|cc|cc|cc}
\toprule
&Frobenius&Spectral&False   &True    &Negative   &Positive \\
&norm     &norm    &positive&positive&eigenvalues&definiteness\\
\hline
&\multicolumn{6}{c}{$p=100$}\\
\hline
soft &9.81	&4.87	&29.5	&97.2	&1.54	&\multirow{2}{*}{19/100}\\
thresholding &(0.07)&(0.05)	&(0.0) &(0.0)	&(0.14)	&\\
our &9.78 &4.85	&30.2	&97.3	&0.00	&\multirow{2}{*}{100/100}\\
method &(0.07)&(0.05)	&(0.0) &(0.0)	&(0.00)&\\
Rothman's &9.78	&4.85	&30.0	&97.3	&0.00	&\multirow{2}{*}{100/100}\\
method &(0.07)&(0.05)	&(0.0) &(0.0)	&(0.00)&\\
\hline
&\multicolumn{6}{c}{$p=200$}\\
\hline
soft &15.95	&5.90	&17.1	&94.1&3.93	&\multirow{2}{*}{7/100}\\
thresholding &(0.12)&(0.06)	&(0.4)&(0.3) &(0.27)	&\\
our &15.81	&5.84	&18.8	&95.0 &0.00	&\multirow{2}{*}{100/100}\\
method &(0.12)&(0.06)	&(0.3)&(0.3) &(0.00)&\\
Rothman's &15.83	&5.85	&18.3	&94.6 &0.00	&\multirow{2}{*}{100/100}\\
method &(0.12)&(0.06)	&(0.4)&(0.3) &(0.00)&\\
\hline
&\multicolumn{6}{c}{$p=500$}\\
\hline
soft &29.46	&6.92	&7.6	&87.7&3.84	&\multirow{2}{*}{4/100}\\
thresholding &(0.18)	&(0.07)	&(0.1) &(0.5) &(0.78)	&\\
our &29.17	&6.84	&8.7	&88.8&0.00	&\multirow{2}{*}{100/100}\\
method &(0.20)	&(0.06)	&(0.2) &(0.6) &(0.00)&\\
Rothman's &NA &NA &NA &NA &NA &\multirow{2}{*}{NA} \\
method &NA &NA &NA &NA &NA &\\
\bottomrule
\end{tabular}
\end{center}
}
\end{table}

\subsection{Real data}

 To demonstrate our proposal we further consider two gene expression datasets: one from a small round blue-cell tumors microarray experiment \citep{khan2001} and the other one from a cardiovascular microarray study \citep{efron2009,efron2010}. The first dataset has 64 training tissue samples with four types of tumors (23 EWS, 8 BL-NHL, 12 NB, and 21 RMS), and 6567 gene expression values for each sample. We applied the pre-filtering step used in Khan et al. (2001) and then picked the top 40 and bottom 160 genes based on the F-statistic as done in \cite{rothman2009}. The second dataset has 63 subjects with 44 healthy controls and 19 cardiovascular patients, and 20426 genes measured for each subject. We used the F-statistic to pick the top 50 and bottom 150 genes. By doing so, it is expected that there is weak dependence between the top and the bottom genes. We considered the soft-thresholding estimator \citep{bickel2008b}, the log-barrier estimator \citep{rothman2011} and our estimator. For all three estimators, the thresholding parameter was chosen by 5-fold cross validation.

\begin{figure}[!ht]
\centering
\begin{minipage}[t]{0.35\linewidth} \centering
    \includegraphics[width=\textwidth]{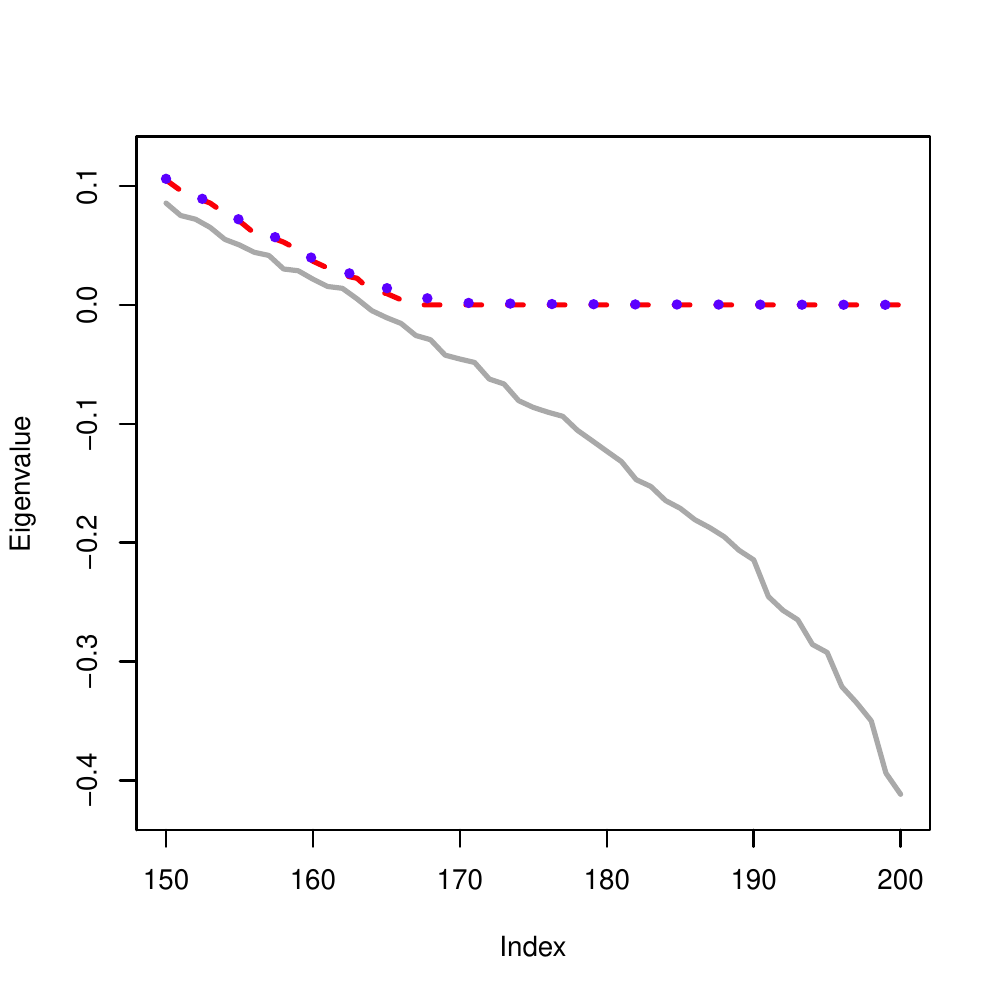}
    {\small (A)}
\end{minipage}
\begin{minipage}[t]{0.35\linewidth} \centering
    \includegraphics[width=\textwidth]{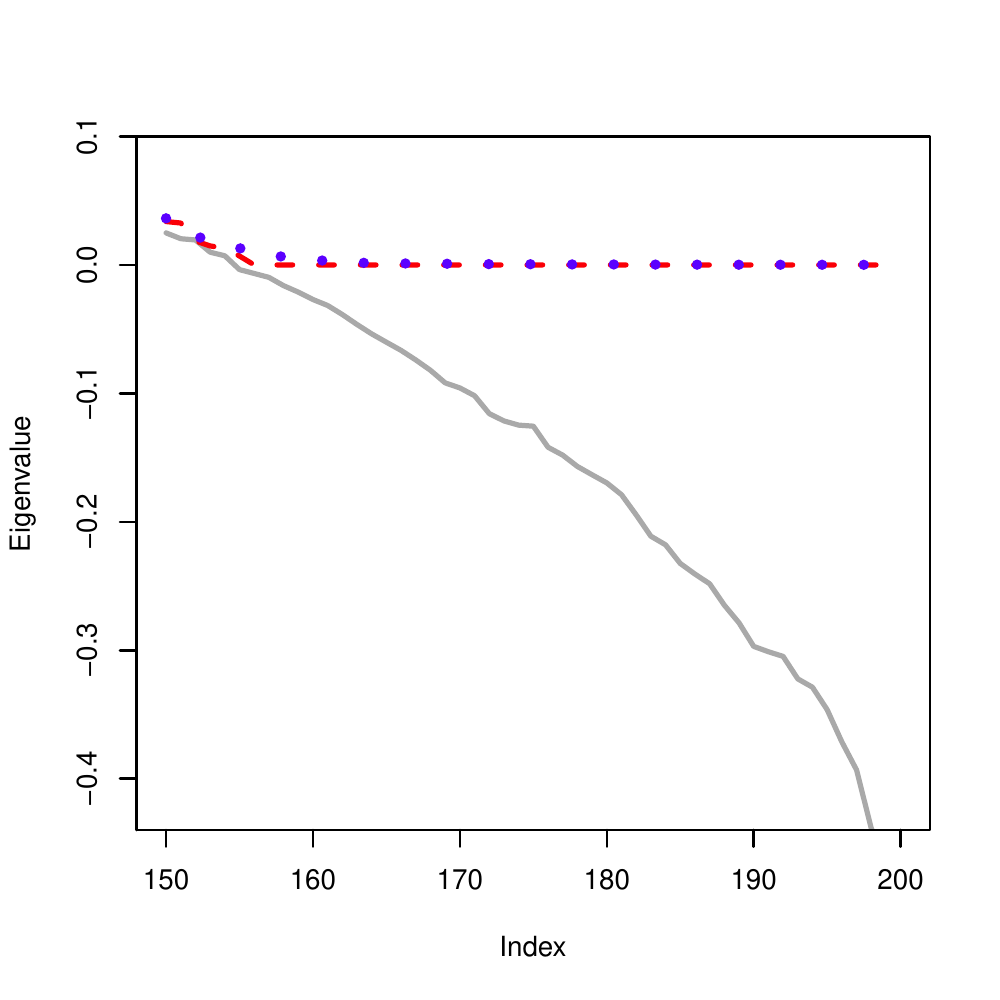}
    {\small (B)}
\end{minipage}
\caption{Plots of the bottom $50$ eigenvalues of all three regularized estimators for the small round blue-cell data (A) and the cardiovascular data (B): $\hat\bSigma$ (solid), $\hat\bSigma^+$ (dashed) and $\breve\bSigma^+$ (dotted).}\label{real:eigen}
\end{figure}

As evidenced in Plot \ref{real:eigen}, the soft-thresholding estimator yields an indefinite matrix for both real examples whereas the other two regularized estimators guarantee the positive-definiteness. The soft-thresholding estimator contains $37$ negative eigenvalues in the small round blue-cell data, and $46$ negative eigenvalues in the cardiovascular data.
Regularized correlation matrix estimation has a natural application in clustering when the dissimilarity measure is constructed using the correlation among features.
For both datasets we did hierarchical clustering using the three regularized estimators. The heat maps are shown in
Figure \ref{real:heatmap} in which the estimated sparsity pattern well matches the expected sparsity pattern.

\begin{figure}[!ht]
\centering
\begin{minipage}[t]{0.32\linewidth} \centering
    \includegraphics[width=\textwidth]{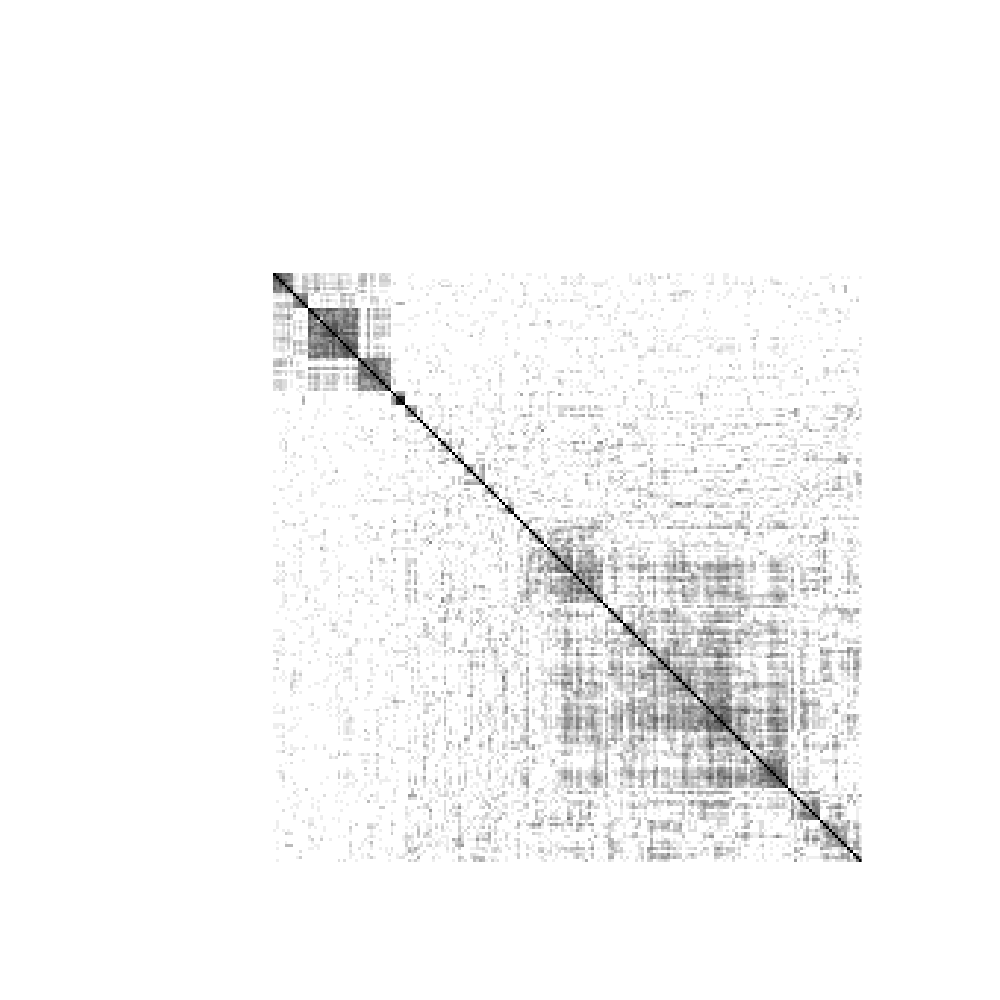}
    {\small (A1)}
\end{minipage}
\begin{minipage}[t]{0.32\linewidth} \centering
    \includegraphics[width=\textwidth]{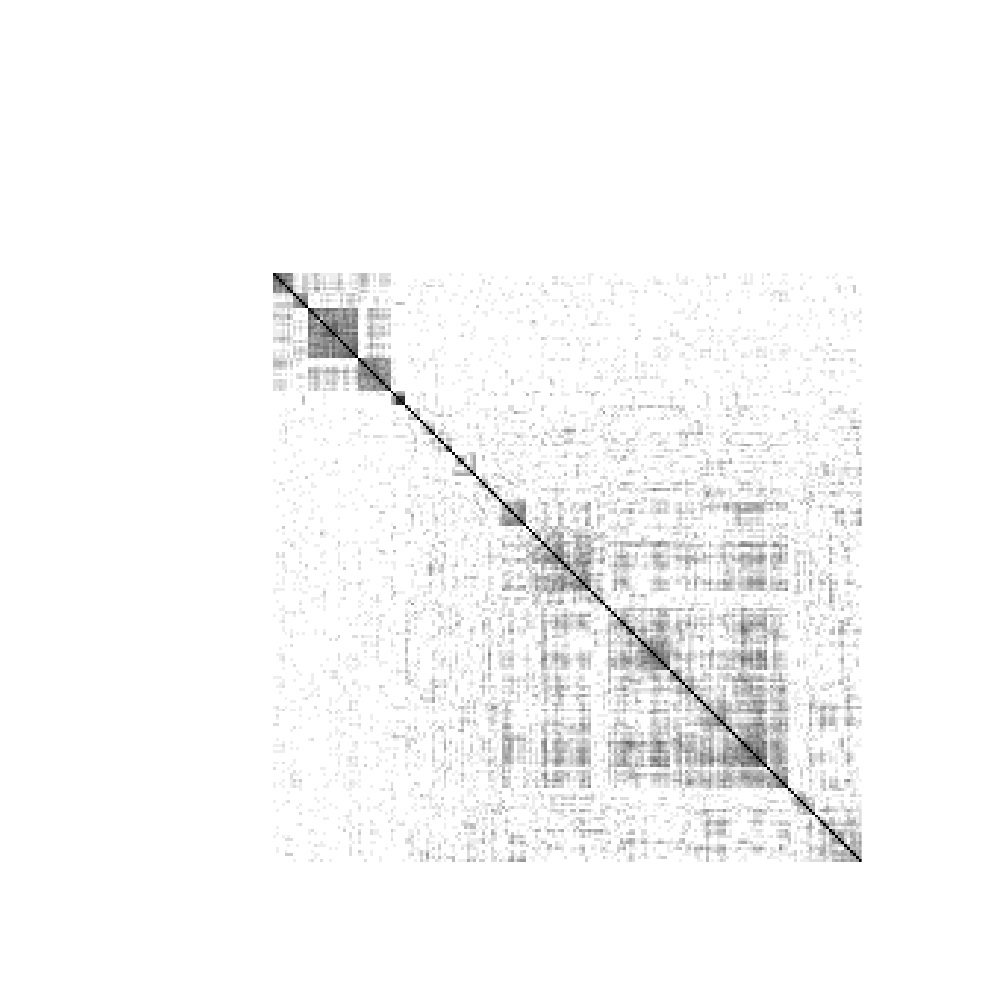}
    {\small (A2)}
\end{minipage}
\begin{minipage}[t]{0.32\linewidth} \centering
    \includegraphics[width=\textwidth]{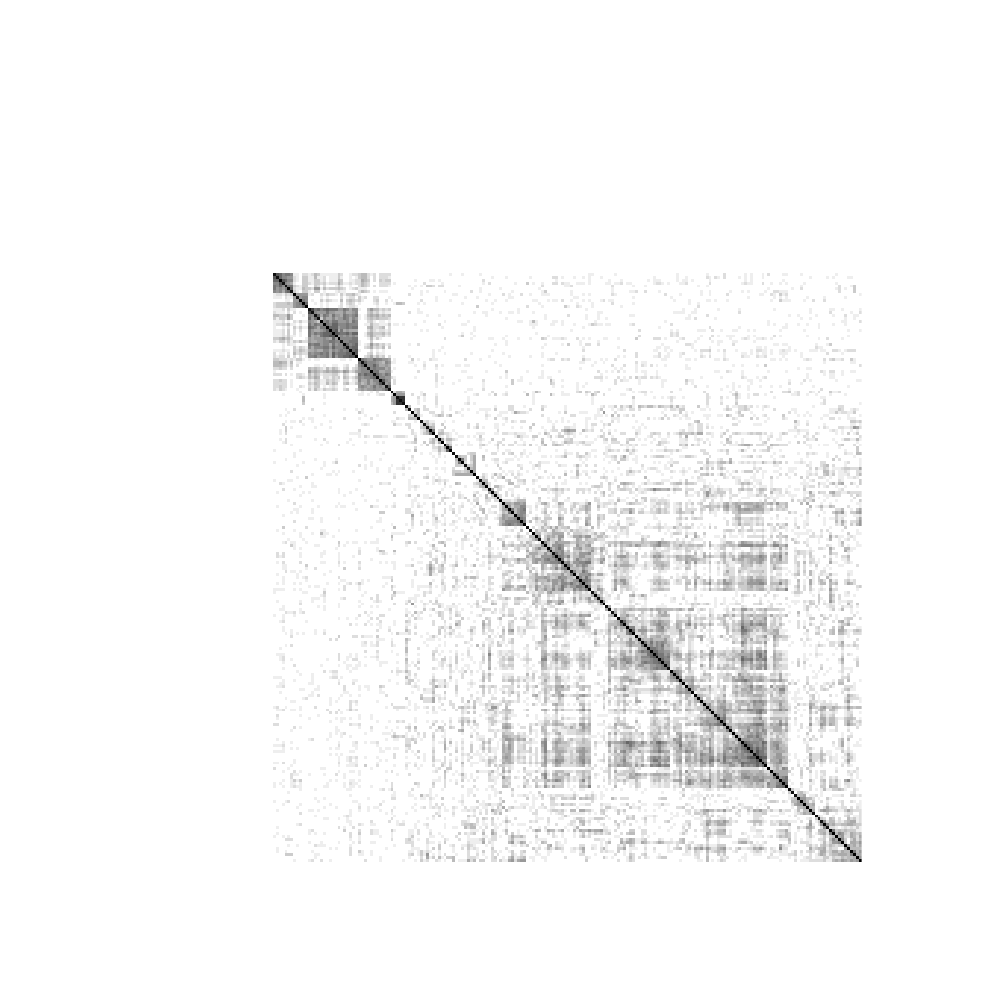}
    {\small (A3)}
\end{minipage}
\begin{minipage}[t]{0.32\linewidth} \centering
    \includegraphics[width=\textwidth]{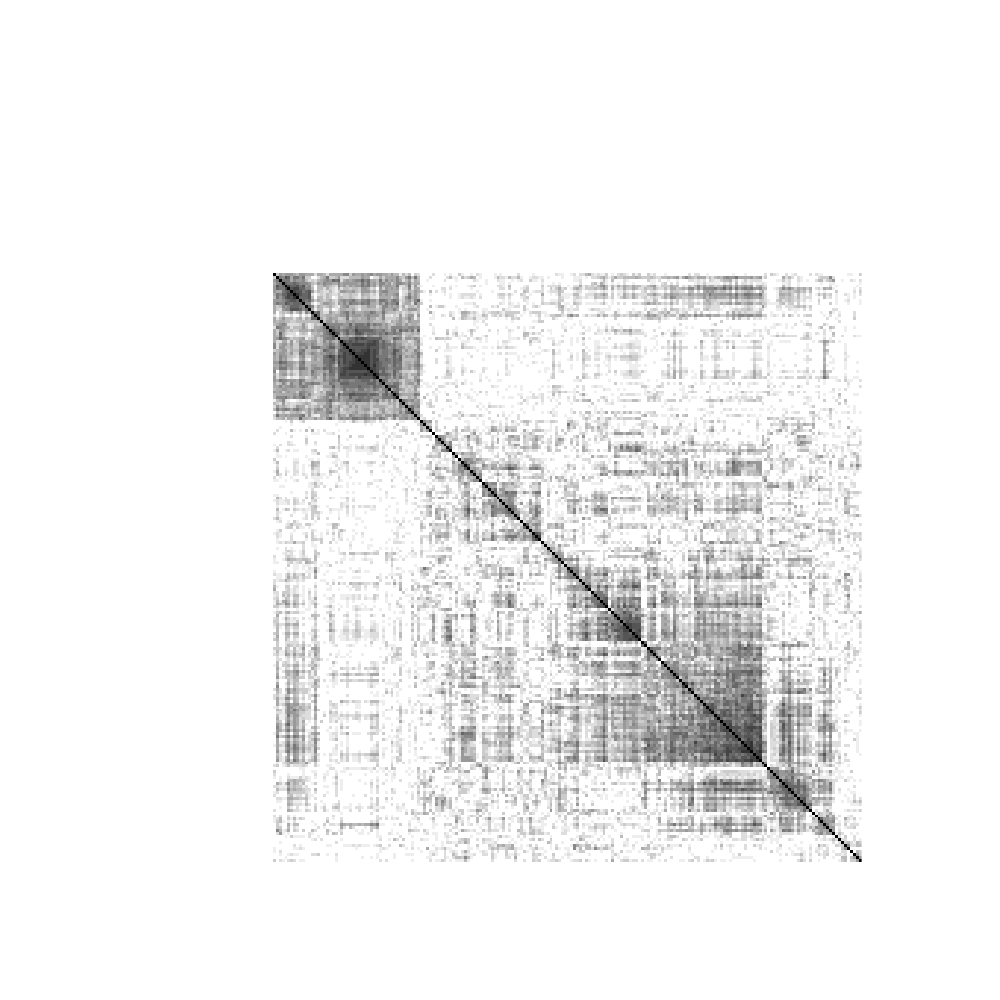}
    {\small (B1)}
\end{minipage}
\begin{minipage}[t]{0.32\linewidth} \centering
    \includegraphics[width=\textwidth]{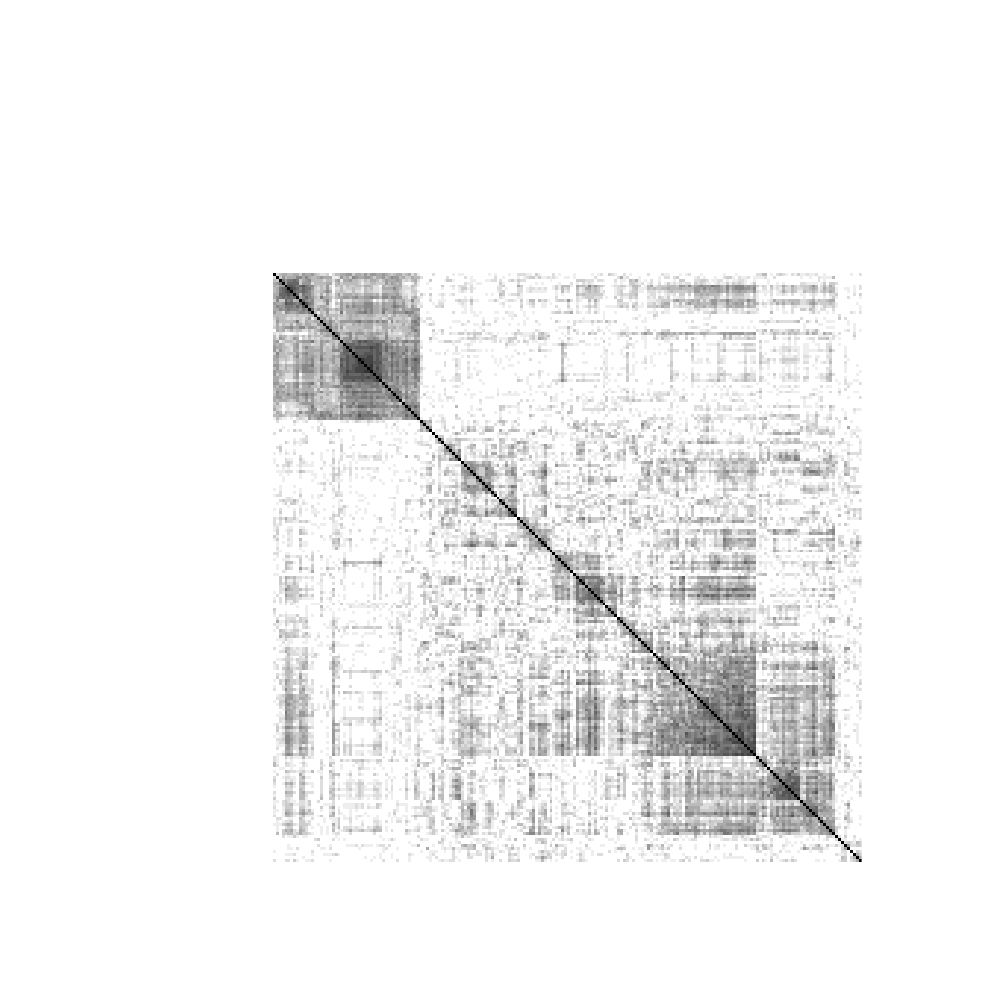}
    {\small (B2)}
\end{minipage}
\begin{minipage}[t]{0.32\linewidth} \centering
    \includegraphics[width=\textwidth]{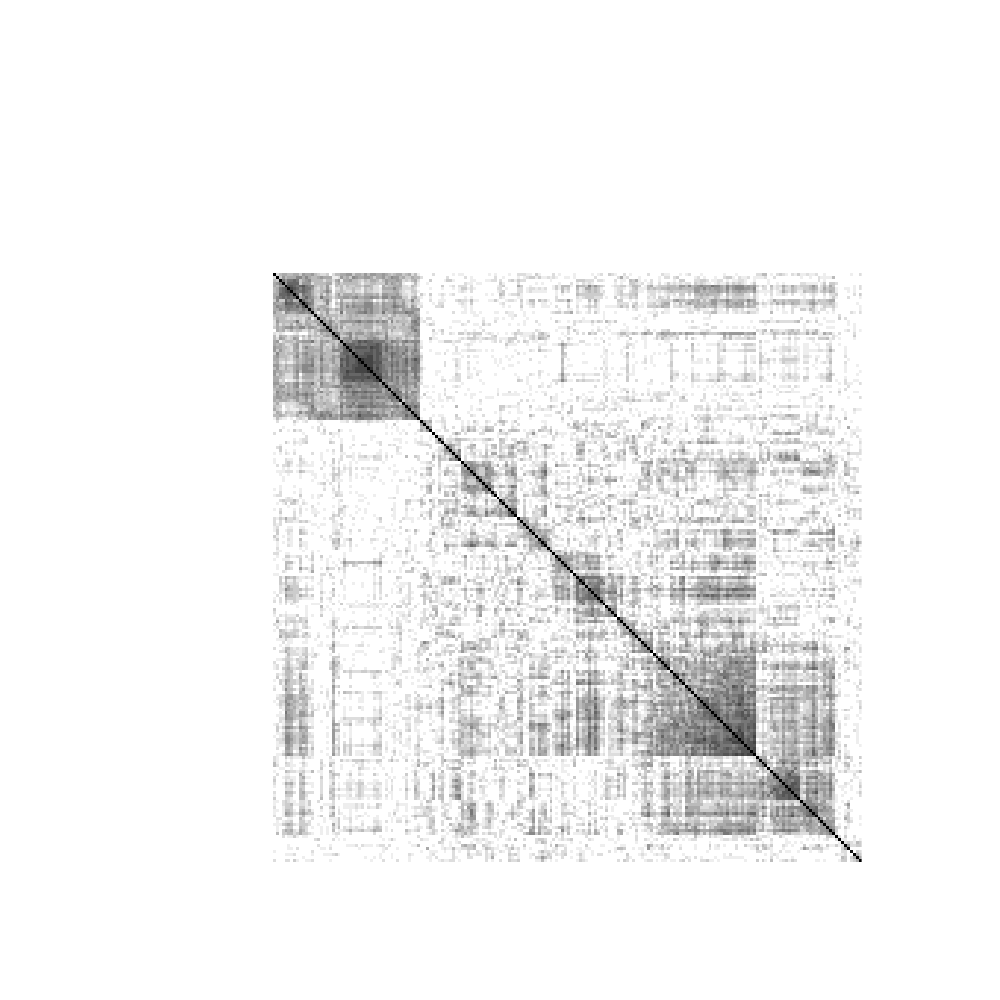}
    {\small (B3)}
\end{minipage}
\caption{Heat maps of the absolute values of three regularized sample correlation matrix estimator for the small round blue-cell data (A) and the cardiovascular data (B): $\hat\bSigma$ (A1, B1), $\hat\bSigma^+$ (A2, B2) and $\breve\bSigma^+$ (A3, B3). The genes are ordered by hierarchical clustering using the estimated correlations.}\label{real:heatmap}
\end{figure}

\begin{table}[!ht]
\small{
\begin{center}
\caption{Total time (in seconds) for computing a solution path with $99$ thesholding parameters. Timing was carried out on an {\tt AMD} 2.8GHz processor.}\label{real:timing}
\vspace{0.1in}
\begin{tabular}{c|cc}
\toprule
  & Blue cell data & Cardiovascular data \\
\hline
Our method   &  74.7  & 66.3 \\
Rothman's method &  1302.7  & 1575.3  \\
\bottomrule
\end{tabular}
\end{center}
}
\end{table}

Finally, we compared the average run times over $5$ cross validations for both $\hat\bSigma^+$  and $\breve\bSigma^+$, as shown in Table \ref{real:timing}. It is obvious that our proposal is much more efficient.

\section{Theoretical properties}

\subsection{Convergence analysis of the algorithm}

In this section, we prove that the sequence $(\bTheta^i,\bSigma^i,\bLambda^i)$ produced by the alternating direction method (Algorithm 1) converges to $(\hat\bTheta^+,\hat\bSigma^+,\hat\bLambda^+)$, where $(\hat\bTheta^+,\hat\bSigma^+)$ is an optimal solution of \eqref{soft-thresholding.alm} and $\hat\bLambda^+$ is the optimal dual variable. This automatically implies that Algorithm 1 gives an optimal solution of \eqref{soft-thresholding.plus}.

We define some necessary notation for ease of presentation. Let $G$ be a $2p$ by $2p$ matrix defined as
\[
G = \begin{pmatrix} \mu \bI_{p\times p} & 0 \\ 0 & (1/\mu)\bI_{p\times p} \end{pmatrix}.
\]
Define the norm $\|\cdot\|_G^2$ as $\|U\|_G^2 = \langle U, GU\rangle$ and the corresponding inner product $\langle \cdot,\cdot\rangle_G$ as $\langle U,V\rangle_G = \langle U,GV\rangle$. Before we give the main theorem about the global convergence of Algorithm 1, we need the following lemma.

\begin{lemma}\label{lem:Fejer-monotone}
Assume that $(\hat\bTheta^+,\hat\bSigma^+)$ is an optimal solution of \eqref{soft-thresholding.alm} and $\hat\bLambda^+$ is the corresponding optimal dual variable associated with the equality constraint $\bSigma = \bTheta$. Then the sequence $\{(\bTheta^i,\bSigma^i,\bLambda^i)\}$ produced by Algorithm 1 satisfies \be \label{lem:conclusion-eq} \|U^{i}-U^*\|_G^2 - \|U^{i+1}-U^*\|_G^2 \geq \|U^i - U^{i+1}\|_G^2, \ee where $U^* = (\hat\bLambda^+,~\hat\bSigma^+)^\T$ and $U^i = (\bLambda^i,~\bSigma^i )^\T$.
\end{lemma}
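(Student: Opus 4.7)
The plan is to combine the KKT conditions of \eqref{soft-thresholding.alm} with the first-order optimality conditions of the $\bTheta$- and $\bSigma$-subproblems, then collapse the resulting monotonicity inequalities into a telescoping $G$-norm relation via the polarization identity $2\langle a-b,b-c\rangle = \|a-c\|_F^2-\|a-b\|_F^2-\|b-c\|_F^2$. At the saddle point $(\hat\bTheta^+,\hat\bSigma^+,\hat\bLambda^+)$, the KKT conditions give primal feasibility $\hat\bTheta^+=\hat\bSigma^+$, the normal-cone inclusion $\hat\bLambda^+\in N_{\{\bTheta\succeq\epsilon\bI\}}(\hat\bTheta^+)$ from stationarity in $\bTheta$, and $-\hat\bLambda^+\in(\hat\bSigma^+-\hat\bSigma_n)+\lambda\partial|\hat\bSigma^+|_1$ from stationarity in $\bSigma$. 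For the iterates, I would write out the subdifferentials of the augmented Lagrangian at $\bTheta^{i+1}$ and $\bSigma^{i+1}$ and substitute the multiplier update \eqref{alg:ADM-3} to simplify, obtaining the parallel inclusions $\bLambda^{i+1}+(\bSigma^i-\bSigma^{i+1})/\mu\in N_{\{\bTheta\succeq\epsilon\bI\}}(\bTheta^{i+1})$ and $-\bLambda^{i+1}\in(\bSigma^{i+1}-\hat\bSigma_n)+\lambda\partial|\bSigma^{i+1}|_1$.

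The next step is to pair each iterate inclusion with its saddle-point counterpart and invoke monotonicity of the corresponding subdifferential. The normal cone is only monotone, while the $\bSigma$ operator is $1$-strongly monotone thanks to the identity contribution from $\|\bSigma-\hat\bSigma_n\|_F^2/2$, so the $\bSigma$ inequality carries an extra lower bound $\|\bSigma^{i+1}-\hat\bSigma^+\|_F^2$. Adding the two resulting inequalities and eliminating $\bTheta^{i+1}$ via the equivalent form $\bTheta^{i+1}-\bSigma^{i+1}=\mu(\bLambda^i-\bLambda^{i+1})$ of \eqref{alg:ADM-3} yields an inequality in three inner products: $\langle\bLambda^{i+1}-\hat\bLambda^+,\bLambda^i-\bLambda^{i+1}\rangle$, $\langle\bSigma^i-\bSigma^{i+1},\bSigma^{i+1}-\hat\bSigma^+\rangle$, and a cross term $\langle\bSigma^i-\bSigma^{i+1},\bLambda^i-\bLambda^{i+1}\rangle$. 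Doubling and applying the polarization identity to the first two converts them into the telescoping combination $\|U^i-U^*\|_G^2 - \|U^{i+1}-U^*\|_G^2 - \|U^i-U^{i+1}\|_G^2$.

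The main obstacle is the cross term $\langle\bSigma^i-\bSigma^{i+1},\bLambda^i-\bLambda^{i+1}\rangle$, which does not split across the two variable blocks. The resolution I would use is to observe that the $\bSigma$-step optimality condition has exactly the same clean form at iteration $i$ as at $i+1$, namely $-\bLambda^i\in(\bSigma^i-\hat\bSigma_n)+\lambda\partial|\bSigma^i|_1$. Applying the $1$-strong monotonicity of this operator \emph{between consecutive iterates} yields $\langle\bSigma^i-\bSigma^{i+1},\bLambda^i-\bLambda^{i+1}\rangle\leq -\|\bSigma^i-\bSigma^{i+1}\|_F^2\leq 0$, and combining this bound with $\|\bSigma^{i+1}-\hat\bSigma^+\|_F^2\geq 0$ renders the residual non-negative, which gives \eqref{lem:conclusion-eq}.
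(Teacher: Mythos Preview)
Your proposal is correct and follows essentially the same route as the paper's proof: write the KKT conditions at the saddle point and the first-order conditions of the two subproblems, substitute the multiplier update to obtain the clean inclusions $\bLambda^{i+1}+(\bSigma^i-\bSigma^{i+1})/\mu\in N_{\{\bTheta\succeq\epsilon\bI\}}(\bTheta^{i+1})$ and $-\bLambda^{i+1}\in(\bSigma^{i+1}-\hat\bSigma_n)+\lambda\partial|\bSigma^{i+1}|_1$, pair them with the KKT conditions via monotonicity, eliminate $\bTheta^{i+1}$ through $\bTheta^{i+1}-\bSigma^{i+1}=\mu(\bLambda^i-\bLambda^{i+1})$, and then handle the leftover cross term by invoking the $\bSigma$-step optimality at two consecutive iterates. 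The paper phrases the last step as monotonicity of $\partial|\cdot|$ applied to the shifted subgradients $-\bLambda^i-\bSigma^i+\hat\bSigma_n$ and $-\bLambda^{i+1}-\bSigma^{i+1}+\hat\bSigma_n$, which is exactly your $1$-strong monotonicity of $\bSigma\mapsto(\bSigma-\hat\bSigma_n)+\lambda\partial|\bSigma|_1$ stated in different language.
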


Now we are ready to give the main convergence result of Algorithm 1.
\begin{theorem}\label{the:main-convergence}
The sequence $\{(\bTheta^i,\bSigma^i,\bLambda^i)\}$ produced by Algorithm 1 from any starting point converges to an optimal solution of \eqref{soft-thresholding.alm}.
\end{theorem}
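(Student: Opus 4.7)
The strategy is to exploit Lemma~\ref{lem:Fejer-monotone} in the standard Fej\'er-monotone fashion. Fix any optimal primal-dual triple $(\hat\bTheta^+,\hat\bSigma^+,\hat\bLambda^+)$ and set $U^* = (\hat\bLambda^+,\hat\bSigma^+)^\T$. By~\eqref{lem:conclusion-eq} the sequence $\|U^i-U^*\|_G$ is non-increasing, so $\{(\bLambda^i,\bSigma^i)\}$ is bounded. Summing~\eqref{lem:conclusion-eq} telescopes to $\sum_{i=0}^\infty \|U^i-U^{i+1}\|_G^2 \leq \|U^0-U^*\|_G^2 < \infty$, forcing $\bLambda^{i+1}-\bLambda^i \to 0$ and $\bSigma^{i+1}-\bSigma^i \to 0$ in Frobenius norm. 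The $\bLambda$-update~\eqref{alg:ADM-3} rearranges to $\bTheta^{i+1}-\bSigma^{i+1} = -\mu(\bLambda^{i+1}-\bLambda^i)$, so $\bTheta^{i+1}-\bSigma^{i+1}\to 0$ and $\{\bTheta^i\}$ is bounded as well. Pass to a subsequence along which $(\bTheta^{i_k+1},\bSigma^{i_k},\bLambda^{i_k}) \to (\bTheta^\infty,\bSigma^\infty,\bLambda^\infty)$; the three limits automatically satisfy $\bTheta^\infty=\bSigma^\infty$.

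Next I would verify that this cluster point satisfies the KKT conditions for~\eqref{soft-thresholding.alm}, namely $\bSigma^\infty=\bTheta^\infty$, $\bTheta^\infty\succeq\epsilon\bI$, $\bLambda^\infty$ in the normal cone of $\{\bTheta\succeq\epsilon\bI\}$ at $\bTheta^\infty$, and $0 \in (\bSigma^\infty-\hat\bSigma_n) + \lambda\,\partial|\bSigma^\infty|_1 + \bLambda^\infty$. Primal feasibility is already in hand. The closed-form $\bTheta$-update $\bTheta^{i+1}=(\bSigma^i+\mu\bLambda^i)_+$ is continuous in its argument because projection onto the closed convex cone $\{\bTheta\succeq\epsilon\bI\}$ is non-expansive, so in the limit $\bTheta^\infty=(\bSigma^\infty+\mu\bLambda^\infty)_+$, which is exactly the required normal-cone condition. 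The $\bSigma$-subproblem yields the subgradient inclusion $0 \in (\bSigma^{i+1}-\hat\bSigma_n) + \lambda\,\partial|\bSigma^{i+1}|_1 + \bLambda^i + (\bSigma^{i+1}-\bTheta^{i+1})/\mu$; continuity of entry-wise soft-thresholding together with closedness of the graph of $\partial|\cdot|_1$ pushes this to the limit, and $\bSigma^\infty=\bTheta^\infty$ annihilates the last term. Hence $(\bTheta^\infty,\bSigma^\infty,\bLambda^\infty)$ is a KKT triple and, in particular, $(\bTheta^\infty,\bSigma^\infty)$ is optimal for~\eqref{soft-thresholding.alm}.

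Finally I would upgrade subsequential convergence to full convergence. Lemma~\ref{lem:Fejer-monotone} holds for \emph{any} optimal triple, so applying it with $U^*=(\bLambda^\infty,\bSigma^\infty)^\T$ shows that $\|U^i-U^\infty\|_G$ is non-increasing; since the subsequence $\{U^{i_k}\}$ tends to $U^\infty$, the whole sequence does, and combined with $\bTheta^{i+1}-\bSigma^{i+1}\to 0$ this yields $(\bTheta^i,\bSigma^i,\bLambda^i)\to(\bTheta^\infty,\bSigma^\infty,\bLambda^\infty)$. I expect the KKT verification to be the main technical obstacle---specifically, writing the subdifferential stationarity of the nonsmooth $\bSigma$-subproblem in a form that is closed under the limit, and justifying continuity of the semidefinite projection through non-expansiveness rather than through the eigendecomposition that is only used to compute the update. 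Everything else is bookkeeping downstream of Lemma~\ref{lem:Fejer-monotone}.
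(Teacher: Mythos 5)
Your proposal is correct and follows essentially the same route as the paper: Fej\'er monotonicity from Lemma~\ref{lem:Fejer-monotone} gives boundedness and vanishing successive differences of $(\bLambda^i,\bSigma^i)$, the $\bLambda$-update forces $\bTheta^{i+1}-\bSigma^{i+1}\to 0$, and passing the subproblem optimality conditions (projection characterization for the $\bTheta$-step, subgradient inclusion for the $\bSigma$-step) to a subsequential limit shows the limit point satisfies the KKT conditions of \eqref{soft-thresholding.alm}. If anything, your argument is slightly more complete than the paper's, which stops after showing every limit point is optimal, whereas you explicitly re-apply the lemma with the limit point as the anchor $U^*$ to upgrade subsequential convergence to convergence of the whole sequence.
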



\subsection{Statistical analysis of the estimator}

Define $\bSigma^0$ as the true covariance matrix for the observations $\bX=(X_{ij})_{n\times p}$, and define the active set of $\bSigma^0=(\sigma^0_{jk})_{1\le j,k\le p}$ as $A_0=\{(j,k):\sigma^0_{jk}\neq 0 {,j\neq k}\}$ with the cardinality $s=|A_0|$. Denote by $\bB_{A_0}$ the Hadamard product $\bB_{p\times p}\circ(I_{\{(j,k)\in A_0\}})_{1\le j,k\le p}=(b_{jk}\cdot I_{\{(j,k)\in A_0\}})_{1\le j,k\le p}$. Define $\sigma_{\max}=\max_{j}\sigma^o_{jj}$ as the maximal
true variance in $\bSigma^0$.

\begin{theorem}\label{theorem:stat-convergence}
Assume that the true covariance matrix $\bSigma^0$ is positive definite.
\begin{itemize}
  \item [(i)] Under the exponential-tail condition that for all $|t|\le \eta$ and $1\le i\le n, 1\le j\le p$
    \[
    E\{\exp(tX_{ij}^2)\}\le K_1,
    \]
    we also assume that $\log p\le n$. For any $M>0$, we pick the thresholding parameter as \[
    \lambda=c_0^2\frac{\log p}n+c_1\left(\frac{\log p}n\right)^{1/2},\]
    where \[
    c_0=\frac{1}2eK_1\eta^{1/2}+\eta^{-1/2}(M+1)\] and \[c_1=2K_1(\eta^{-1}+\frac{1}4\eta\sigma_{\max}^2)\exp(\frac{1}2\eta\sigma_{\max})+2\eta^{-1}(M+2).\] With probability at least $1-3p^{-M}$, we have
    \[
    \|\hat\bSigma^+-\bSigma^0\|_F\le 5\lambda { (s+p)}^{1/2}.
    \]
  \item [(ii)] Under the polynomial-tail condition that for all $\gamma>0$, $\varepsilon>0$and $1\le i\le n, 1\le j\le p$
    \[
    E\{|X_{ij}|^{4(1+\gamma+\varepsilon)}\}\le K_2,
    \]
    we also assume that $p\le cn^\gamma$ for some $c>0$. For any $M>0$, we pick the thresholding parameter as \[
    \lambda=8(K_2+1)(M+1)\frac{\log p}n+8(K_2+1)(M+2)\left(\frac{\log p}n\right)^{1/2},\]
    With probability at least $1-O(p^{-M})-3K_2p(\log n)^{2(1+\gamma+\varepsilon)}n^{-\gamma-\varepsilon}$, we have
    \[
    \|\hat\bSigma^+-\bSigma^0\|_F\le 5\lambda { (s+p)}^{1/2}.
    \]
\end{itemize}
\end{theorem}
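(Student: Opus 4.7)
The plan is to reduce the Frobenius norm bound on $\hat\bSigma^+-\bSigma^0$ to a deterministic estimate driven by the element-wise max deviation $\|\hat\bSigma_n-\bSigma^0\|_\infty$, and then invoke two different concentration inequalities for the two tail regimes. First, since $\bSigma^0\succ 0$, we can take $\epsilon$ smaller than the smallest eigenvalue of $\bSigma^0$ so that $\bSigma^0$ is feasible for (\ref{soft-thresholding.plus}). The optimality of $\hat\bSigma^+$ then yields the basic inequality
\ben
\tfrac{1}{2}\|\hat\bSigma^+-\hat\bSigma_n\|_F^2+\lambda|\hat\bSigma^+|_1
~\le~ \tfrac{1}{2}\|\bSigma^0-\hat\bSigma_n\|_F^2+\lambda|\bSigma^0|_1,
\een
which, after setting $\bDelta=\hat\bSigma^+-\bSigma^0$ and expanding, reduces to
\ben
\tfrac{1}{2}\|\bDelta\|_F^2 ~\le~ \langle\hat\bSigma_n-\bSigma^0,\bDelta\rangle+\lambda(|\bSigma^0|_1-|\hat\bSigma^+|_1).
\een

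Next I would decompose $\bDelta$ into its diagonal piece $\bDelta_D$, its off-diagonal piece on the true active set $\bDelta_{A_0}$, and its off-diagonal piece on $A_0^c$. Hölder's inequality gives $\langle\hat\bSigma_n-\bSigma^0,\bDelta\rangle\le\|\hat\bSigma_n-\bSigma^0\|_\infty\cdot(\|\bDelta_D\|_1+\|\bDelta_{A_0}\|_1+\|\bDelta_{A_0^c}\|_1)$, while the triangle inequality combined with $\sigma^0_{jk}=0$ on $A_0^c$ yields $|\bSigma^0|_1-|\hat\bSigma^+|_1\le\|\bDelta_{A_0}\|_1-\|\bDelta_{A_0^c}\|_1$. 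On the event $\|\hat\bSigma_n-\bSigma^0\|_\infty\le\lambda$ the $A_0^c$ term becomes nonpositive and drops out, and Cauchy--Schwarz converts $\|\bDelta_D\|_1\le\sqrt{p}\|\bDelta_D\|_F$ and $\|\bDelta_{A_0}\|_1\le\sqrt{s}\|\bDelta_{A_0}\|_F$. Combining and simplifying gives the deterministic bound $\|\bDelta\|_F\le C\lambda\sqrt{s+p}$ for an absolute constant $C\le 5$.

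It therefore suffices to show that, with the probabilities claimed in (i) and (ii), $\|\hat\bSigma_n-\bSigma^0\|_\infty\le\lambda/2$ (any fixed fraction of $\lambda$ works, matching the constants $c_0,c_1$ and the factor $8(K_2+1)$). For part (i), under the exponential-tail assumption each $\hat\sigma_{jk}-\sigma^0_{jk}$ is an average of sub-exponential random variables $X_{ij}X_{ik}-EX_{ij}X_{ik}$; a standard Bernstein-type bound combined with a union bound over the $p^2$ pairs, together with the assumption $\log p\le n$, delivers the exact threshold $\lambda=c_0^2(\log p)/n+c_1((\log p)/n)^{1/2}$ with probability at least $1-3p^{-M}$. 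For part (ii), the exponential bound is unavailable, so I would truncate each $X_{ij}X_{ik}$ at a level of order $(\log n)^{1+\gamma+\varepsilon}$, apply Bernstein to the bounded part and a moment (Markov) bound to the tail using the $4(1+\gamma+\varepsilon)$-th moment assumption; the union bound over $p^2\le c^2n^{2\gamma}$ pairs then produces the stated probability $1-O(p^{-M})-3K_2 p(\log n)^{2(1+\gamma+\varepsilon)}n^{-\gamma-\varepsilon}$.

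The algebraic reduction is routine; the main obstacle, as in \cite{bickel2008b} and \cite{rothman2009}, is the polynomial-tail concentration step, where the truncation level, the moment bound on the tail, and the dimension calibration $p\le cn^\gamma$ have to be balanced simultaneously to yield the precise choice of $\lambda$ and the stated probability. Once these two concentration statements are in hand, substitution into the deterministic inequality completes the proof of both (i) and (ii).
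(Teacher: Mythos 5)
Your deterministic reduction is correct, and it is essentially the paper's first step in a slightly more direct form: the paper evaluates $F(\bDelta)-F(\bzero)$ on the sphere $\|\bDelta\|_F=5\lambda(s+p)^{1/2}$ and then invokes convexity of $G(\cdot)$, whereas you use the basic inequality $F(\hat\bDelta)\le F(\bzero)$ directly; both arguments run on the same event $\max_{j,k}|\hat\sigma^n_{jk}-\sigma^0_{jk}|\le\lambda$, use the same split of $\bDelta$ into diagonal, $A_0$ and $A_0^c$ parts (the diagonal must indeed be treated separately since $|\cdot|_1$ excludes it, which is where the $p$ in $(s+p)^{1/2}$ comes from), and the same Cauchy--Schwarz step, yielding a constant no larger than $5$. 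Also, insisting on $\max_{j,k}|\hat\sigma^n_{jk}-\sigma^0_{jk}|\le\lambda/2$ is unnecessary and would perturb the advertised constants; the level-$\lambda$ event is all that is needed.

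The concentration half, however, has two concrete problems. In (i), you describe $\hat\sigma^n_{jk}-\sigma^0_{jk}$ as an average of $X_{ij}X_{ik}-E[X_{ij}X_{ik}]$, but the sample covariance also subtracts $\bar X_j\bar X_k$ with $\bar X_j=\frac1n\sum_i X_{ij}$, and it is exactly this term that generates the $c_0^2\frac{\log p}{n}$ component of $\lambda$: the paper shows $\Pr(|\sum_i X_{ij}|>n\varepsilon_0)\le p^{-M-1}$ with $\varepsilon_0=c_0(\log p/n)^{1/2}$, so that $|\bar X_j\bar X_k|\le\varepsilon_0^2$ on the good event, and takes $\lambda=\varepsilon_0^2+\varepsilon_1$; moreover the exact constants $c_0,c_1$ and the bound $1-3p^{-M}$ come from explicit moment-generating-function estimates (via $1+u\le e^u\le 1+u+\tfrac12u^2e^{|u|}$ and $v^2e^{|v|}\le e^{v^2+1}$), not from an off-the-shelf Bernstein inequality. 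More seriously, in (ii) your truncation level is wrong: truncating $X_{ij}X_{ik}$ at order $(\log n)^{1+\gamma+\varepsilon}$ means truncating $X_{ij}$ at a polylogarithmic level, and then the union bound over the $np$ variables gives $np\,\Pr(|X_{ij}|>\tau_n)\le K_2\,np\,\tau_n^{-4(1+\gamma+\varepsilon)}\asymp n^{1+\gamma}(\log n)^{-O(1)}$, which diverges, so the event that all variables stay below the truncation level cannot be controlled and the tail part of the sum is not handled. A polynomial truncation is required: the paper takes $\delta_n=n^{1/4}(\log n)^{-1/2}$ for the $X_{ij}$ (hence $\delta_n^2=n^{1/2}/\log n$ for the products), which simultaneously keeps the Bernstein linear term $\delta_n^2\varepsilon_3=O((\log n)^{-1/2})$ negligible, makes the truncation biases $E[Z_{ij}]$ and $E[R_{ijk}]$ of smaller order than $\varepsilon_2$ and $\varepsilon_3$, and yields $np\,\Pr(|X_{ij}|>\delta_n)\le K_2 p(\log n)^{2(1+\gamma+\varepsilon)}n^{-\gamma-\varepsilon}$, which is precisely the second term in the stated probability. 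As written, your part (ii) step would fail; with the truncation recalibrated to the paper's $\delta_n$ the outline matches the paper's proof.
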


Define $d=\max_{j}\sum_{k}I_{\{\sigma_{jk}\neq 0\}}$ and assume that $\sigma_{\max}$ is bounded by a fixed constant, then we can pick $\lambda=O((\log p/n)^{1/2})$ to
achieve the minimax optimal rate of convergence under the Frobenius norm as in Theorem 4 of \cite{cz2012} that \ben
\frac{1}p\|\hat\bSigma^+-\bSigma^0\|_F^2=O_p\left((1+\frac{s}p)\frac{\log p}n\right)= O_p\left(d\frac{\log p}n\right). \een}
However, to attain the same rate in the presence of the log-determinant barrier term, \cite{rothman2011} instead would
require that $\sigma_{\textrm{min}}$, the minimal eigenvalue of the true covariance matrix, should be bounded away from zero by some positive constant,
and also that the barrier parameter should be bounded by some positive quantity. We would like to point out that if $\sigma_{\textrm{min}}$ is
bounded away from zero, then the soft-thresholding estimator $\hat\bSigma_{st}$ will be positive-definite with an overwhelming probability tending to $1$,
\citep{bickel2008b,cz2011,cz2012}. Therefore the theory requiring a lower bound on $\sigma_{\textrm{min}}$ is not very appealing.

\section{Conclusions}
The soft-thresholding estimator has been shown to enjoy good asymptotic properties for estimating large sparse covariance matrices.
But its positive definiteness property can be easily violated, which means the soft-thresholding estimator could be in principle an inadmissible estimator for covariance matrices.
In this paper we have put the soft-thresholding estimator in a convex optimization framework and considered a natural modification by imposing the positive definiteness constraint.
We have developed a fast alternating direction method to solve the constrained optimization problem and the resulting estimator retains the sparsity and positive definiteness properties simultaneously. The algorithm and the new estimator are supported by numerical and theoretical results.

\section*{Acknowledgement}
We thank Adam Rothman  for sharing his code. Shiqian Ma's research is supported by the National Science Foundation postdoctoral fellowship through Institute for Mathematics and Its Applications at University of Minnesota. Hui Zou's research is supported in part by grants from the National Science Foundation and the Office of Naval Research.

\section*{Appendix: Technical Proofs}

\begin{proof}[Proof of Lemma \ref{lem:Fejer-monotone}]
Since $(\hat\bTheta^+,\hat\bSigma^+,\hat\bLambda^+)$ is optimal to \eqref{soft-thresholding.alm}, it follows from the KKT conditions that the followings hold.
\be\label{KKT-1}
(-\hat\bLambda^+-\hat\bSigma^++\hat\bSigma_n)_{j\ell} /{\lambda}\in \partial |\hat\bSigma_{j\ell}^+|,\quad \forall j=1,\ldots,p, \ell=1,\ldots,p \mbox{ and } j\neq \ell, \ee
\be\label{KKT-1.5} (\hat\bSigma^+-\hat{\bSigma}_n)_{jj} + \hat\bLambda^+_{jj}=0, \quad \forall j=1,\ldots,p, \ee
\be\label{KKT-2}\hat\bTheta^+=\hat\bSigma^+,\ee \be\label{KKT-2.5} \hat{\bTheta}^+ \succeq \epsilon \bI, \ee and \be\label{KKT-3}\langle \hat{\bLambda}^+,\bTheta-\hat{\bTheta}^+ \rangle\leq 0,\quad\forall\bTheta\succeq\epsilon\bI.\ee

Note that the optimality conditions for the first subproblem in Algorithm 1, i.e. the subproblem with respect to $\bTheta$ in \eqref{alg:ADM-1}, are given by
\be\label{opt-cond-theta}\langle\bLambda^i-(\bTheta^{i+1}-\bSigma^i)/\mu,\bTheta-\bTheta^{i+1}\rangle\leq 0,\quad\forall\bTheta\succeq\epsilon\bI.\ee Using the updating formula for $\bLambda^i$ in Algorithm 1, i.e., \be\label{update-Lambda}\bLambda^{i+1}=\bLambda^i-(\bSigma^{i+1}-\bTheta^{i+1})/\mu,\ee
\eqref{opt-cond-theta} can be rewritten as
\be\label{opt-cond-theta-1} \langle\bLambda^{i+1}-(\bSigma^{i+1}-\bSigma^i)/\mu,\bTheta-\bTheta^{i+1}\rangle\leq 0,\quad\forall\bTheta\succeq\epsilon\bI.\ee Now by letting $\bTheta=\bTheta^{i+1}$ in \eqref{KKT-3} and $\bTheta=\hat{\bTheta}^+$ in \eqref{opt-cond-theta-1}, we can get that
\be\label{proof-lemma-theta-inequa-1}\langle\hat{\bLambda}^+,\bTheta^{i+1}-\hat{\bTheta}^+\rangle\leq 0, \ee and
\be\label{proof-lemma-theta-inequa-2}\langle\bLambda^{i+1}-(\bSigma^{i+1}-\bSigma^i)/\mu,\hat{\bTheta}^+-\bTheta^{i+1}\rangle \leq 0. \ee
Summing \eqref{proof-lemma-theta-inequa-1} and \eqref{proof-lemma-theta-inequa-2} yields
\be\label{proof-lemma-theta-inequa-3}\langle\bTheta^{i+1}-\hat\bTheta^+,
(\bLambda^{i+1}-\hat\bLambda^+)+(\bSigma^i-\bSigma^{i+1})/\mu\rangle \geq 0.\ee

The optimality conditions for the second subproblem in Algorithm 1, i.e., the subproblem with respect to $\bSigma$ in \eqref{alg:ADM-1} are given by \be \label{opt-cond-Sigma-tmp1}0\in (\bSigma^{i+1}-\hat\bSigma_n)_{j\ell}+\lambda\partial |\bSigma^{i+1}_{j\ell}| + \bLambda^i_{j\ell} + (\bSigma^{i+1}-\bTheta^{i+1})_{j\ell}/\mu,\quad \forall j=1,\ldots,p,\ell=1,\ldots,p,\mbox{ and }j\neq \ell,\ee and
\be \label{opt-cond-Sigma-tmp2} (\bSigma^{i+1}-\hat{\bSigma}_n)_{jj}+\bLambda^i_{jj}+(\bSigma^{i+1}-\bTheta^{i+1})_{jj}/\mu=0,\quad \forall j=1,\ldots,p.\ee
Note that by using \eqref{update-Lambda}, \eqref{opt-cond-Sigma-tmp1} and \eqref{opt-cond-Sigma-tmp2} can be respectively rewritten as:
\be\label{opt-cond-Sigma}
(-\bLambda^{i+1}-\bSigma^{i+1}+\hat\bSigma_n)_{j\ell}/\lambda\in\partial |\bSigma^{i+1}_{j\ell}|, \quad \forall j=1,\ldots,p,\ell=1,\ldots,p,\mbox{ and }j\neq \ell,\ee and
\be \label{opt-cond-Sigma-1.5} (\bSigma^{i+1}-\hat{\bSigma}_n)_{jj}+\bLambda^{i+1}_{jj}=0,\quad \forall j=1,\ldots,p.\ee
Using the fact that $\partial |\cdot|$ is a monotone function, \eqref{KKT-1}, \eqref{KKT-1.5}, \eqref{opt-cond-Sigma} and \eqref{opt-cond-Sigma-1.5} imply
\be\label{proof-lemma-sigma-inequa-1}
\langle \bSigma^{i+1}-\hat\bSigma^+, (\hat\bLambda^+-\bLambda^{i+1})+(\hat\bSigma^+-\bSigma^{i+1})\rangle\geq 0.
\ee
The summation of \eqref{proof-lemma-theta-inequa-3} and \eqref{proof-lemma-sigma-inequa-1} gives
\be\label{proof-lemma-combine-inequa-1}
\begin{array}{l}
\langle \bSigma^{i+1}-\hat\bSigma^+, \hat\bLambda^+-\bLambda^{i+1}\rangle +
\langle \hat\bTheta^+-\bTheta^{i+1},\hat\bLambda^+-\bLambda^{i+1} \rangle\\
+\langle \hat\bTheta^+-\bTheta^{i+1}, \bSigma^{i+1}-\bSigma^i \rangle/\mu
\geq \|\bSigma^{i+1}-\hat\bSigma^+\|_F^2.
\end{array}
\ee
Combining \eqref{proof-lemma-combine-inequa-1} with $\bTheta^{i+1} = \mu(\bLambda^i-\bLambda^{i+1})+\bSigma^{i+1}$ and $\hat\bTheta^+=\hat\bSigma^+$ leads to
\be \label{proof-lemma-combine-inequa-2}
\begin{array}{l}
\langle \bSigma^{i+1}-\hat\bSigma^+, \hat\bLambda^+-\bLambda^{i+1}\rangle +
\langle \hat\bSigma^+-\bSigma^{i+1}-\mu(\bLambda^i-\bLambda^{i+1}), \hat\bLambda^+-\bLambda^{i+1} \rangle \\
+ \langle \hat\bSigma^+-\bSigma^{i+1}-\mu(\bLambda^i-\bLambda^{i+1}),\bSigma^{i+1}-\bSigma^i\rangle /\mu
\geq \|\bSigma^{i+1}-\hat\bSigma^+\|_F^2.
\end{array}
\ee
Simple algebraic derivation from \eqref{proof-lemma-combine-inequa-2} yields the following inequality:
\be \label{proof-lemma-final-inequa-1}
\begin{array}{l}
\mu \langle \bLambda^{i+1}-\hat\bLambda^+,\bLambda^i-\bLambda^{i+1} \rangle
+ \langle \bSigma^{i+1}-\hat\bSigma^+,\bSigma^i-\bSigma^{i+1}\rangle\mu \\
\geq \|\bSigma^{i+1}-\hat\bSigma^+\|_F^2 - \langle \bLambda^i-\bLambda^{i+1},\bSigma^i-\bSigma^{i+1} \rangle.
\end{array}
\ee
Rearranging the terms on the left hand side of \eqref{proof-lemma-final-inequa-1} using $\hat\bTheta^+-\bTheta^{i+1} = (\hat\bTheta^+-\bTheta^i) + (\bTheta^i-\bTheta^{i+1})$ and $\hat\bSigma^+-\bSigma^{i+1} = (\hat\bSigma^+-\bSigma^i) + (\bSigma^i-\bSigma^{i+1})$, then \eqref{proof-lemma-combine-inequa-1} can be reduced to
\be
\label{proof-lemma-final-inequa-2}
\begin{array}{ll} & \mu\langle\bLambda^i-\hat\bLambda^+,\bLambda^i-\bLambda^{i+1}\rangle + \langle\bSigma^i-\hat\bSigma^+,\bSigma^i-\bSigma^{i+1}\rangle/\mu \\
\geq & \mu\|\bLambda^i-\bLambda^{i+1}\|_F^2 + \|\bSigma^i-\bSigma^{i+1}\|_F^2/\mu
+\|\bSigma^{i+1}-\hat\bSigma^+\|_F^2-\langle\bLambda^i-\bLambda^{i+1},\bSigma^i-\bSigma^{i+1}\rangle.
\end{array}
\ee

Using the notation of $U^i$ and $U^*$, \eqref{proof-lemma-final-inequa-2} can be rewritten as
\be\label{proof-lemma-final-inequa-3}
\langle U^i-U^*,U^i - U^{i+1} \rangle_G
\geq
\|U^i-U^{i+1}\|_G^2+\|\bSigma^{i+1}-\hat\bSigma^+
\|_F^2-\langle\bLambda^i-\bLambda^{i+1},\bSigma^i-\bSigma^{i+1}\rangle.
\ee
Combining \eqref{proof-lemma-final-inequa-3} with the following identity \[
\|U^{i+1}-U^*\|_G^2=\|U^{i+1}-U^i\|_G^2-2\langle U^{k}-U^{i+1}, U^i-U^*\rangle_G+\|U^i-U^*\|_G^2,
\]
we get
\be \label{proof-lemma-final-inequa-4}
\begin{array}{ll}
& \|U^i-U^*\|_G^2 - \|U^{i+1}-U^*\|_G^2 \\
= & 2\langle U^i-U^{i+1},U^i-U^*\rangle-\|U^{i+1}-U^i\|_G^2 \\
\geq & 2\|U^i-U^{i+1}\|_G^2+2\|\bSigma^{i+1}-\hat\bSigma^+\|^2-
2\langle\bLambda^i-\bLambda^{i+1},\bSigma^i-\bSigma^{i+1}\rangle-\|U^{i+1}-U^i\|_G^2 \\
= & \|U^i-U^{i+1}\|_G^2+2\|\bSigma^{i+1}-\hat\bSigma^+\|^2-
2\langle\bLambda^i-\bLambda^{i+1},\bSigma^i-\bSigma^{i+1}\rangle.
\end{array}
\ee
Now, using \eqref{opt-cond-Sigma} and \eqref{opt-cond-Sigma-1.5} for $i$ instead of $i+1$, we get,
\be\label{opt-cond-Sigma-k}
(-\bLambda^{i}-\bSigma^{i}+\hat\bSigma_n)_{j\ell}/\lambda\in\partial |\bSigma^{i}_{j\ell}|,
\quad \forall j=1,\ldots,p,\ell=1,\ldots,p,\mbox{ and } j\neq \ell,
\ee and
\be \label{opt-cond-Sigma-k-1.5}
(\bSigma^{i}-\hat{\bSigma}_n)_{jj}+\bLambda^{i}_{jj}=0,\quad \forall j=1,\ldots,p.
\ee
Combining \eqref{opt-cond-Sigma}, \eqref{opt-cond-Sigma-1.5}, \eqref{opt-cond-Sigma-k}, \eqref{opt-cond-Sigma-k-1.5} and using the fact that $\partial |\cdot|$ is a monotone function, we obtain,
\[\langle\bSigma^i-\bSigma^{i+1},\bLambda^{i+1}-\bLambda^i+\bSigma^{i+1}-\bSigma^i\rangle\geq 0,\] which immediately implies,
\be\label{proof-lemma-final-inequa-5}
\langle\bSigma^i-\bSigma^{i+1},\bLambda^{i+1}-\bLambda^i\rangle
\geq
\|\bSigma^{i+1}-\bSigma^i\|_F^2 \geq 0.
\ee
By substituting \eqref{proof-lemma-final-inequa-5} into \eqref{proof-lemma-final-inequa-4}, we get the desired result \eqref{lem:conclusion-eq}.
\end{proof}

\begin{proof}[Proof of Theorem \ref{the:main-convergence}]
From Lemma \ref{lem:Fejer-monotone} we can easily get that
\begin{itemize}
\item [(i)] $\|U^i-U^{i+1}\|_G \rightarrow 0$;
\item [(ii)] $\{U^i\}$ lies in a compact region;
\item [(iii)] $\|U^i-U^*\|_G^2$ is monotonically non-increasing and thus converges.
\end{itemize}
It follows from (i) that $\bLambda^i-\bLambda^{i+1}\rightarrow 0$ and $\bSigma^i-\bSigma^{i+1}\rightarrow 0$. Then \eqref{update-Lambda} implies that
$\bTheta^i-\bTheta^{i+1}\rightarrow 0$ and $\bTheta^i-\bSigma^i\rightarrow 0$. From (ii) we obtain that, $U^i$ has a subsequence $\{U^{i_j}\}$ that
converges to $\bar{U}=(\bar\bLambda,\bar\bSigma)$, i.e., $\bLambda^{i_j}\rightarrow\bar\bLambda$ and $\bSigma^{i_j}\rightarrow\bar\bSigma$.
From $\bTheta^i-\bSigma^i\rightarrow 0$ we also get that $\bTheta^{i_j}\rightarrow\bar\bTheta:=\bar\bSigma$. Therefore,
$(\bar\bTheta,\bar\bSigma,\bar\bLambda)$ is a limit point of $\{(\bTheta^i,\bSigma^i,\bLambda^i)\}$.

Note that \eqref{opt-cond-Sigma} and \eqref{opt-cond-Sigma-tmp2} respectively imply that
\be\label{proof-theorem-kkt-1}
(-\bar\bLambda-\bar\bSigma+\hat\bSigma_n)_{j\ell}/\lambda\in\partial |\bar\bSigma_{j\ell}|,\quad\forall j=1,\ldots,p,\ell=1,\ldots,p,\mbox{ and } j\neq \ell,
\ee
and
\be\label{proof-theorem-kkt-1.5}
(\bar\bSigma-\hat{\bSigma}_n)_{jj}+\bar{\bLambda}_{jj} = 0,\quad \forall j=1,\ldots,p,
\ee
and \eqref{opt-cond-theta-1} implies that
\be\label{proof-theorem-kkt-2}
\langle\bar\bLambda,\bTheta-\bar\bTheta\rangle\leq 0,\quad\forall\bTheta\succeq\epsilon\bI.
\ee
\eqref{proof-theorem-kkt-1}, \eqref{proof-theorem-kkt-1.5} and \eqref{proof-theorem-kkt-2} together with $\bar\bTheta=\bar\bSigma$ mean that $(\bar\bTheta,\bar\bSigma,\bar\bLambda)$ is an optimal solution to \eqref{soft-thresholding.alm}. Therefore, we showed that any limit point
of $\{(\bTheta^i,\bSigma^i,\bLambda^i)\}$ is an optimal solution to \eqref{soft-thresholding.alm}.
\end{proof}

\begin{proof}[Proof of Theorem \ref{theorem:stat-convergence}]
Without loss of generality, we may always assume that $E(X_{ij})=0$ for all $1\le i\le n, 1\le j \le p$. By the condition that $\bSigma^0$ is positive definite, we can always choose some very small $\epsilon>0$ such that $\epsilon$ is smaller than the minimal eigenvalue of $\bSigma^0$. We introduce $\bDelta=\bSigma-\bSigma^0$, and then we can write \eqref{soft-thresholding.plus} in terms of $\bDelta$ as follows,
\ben
\hat\bDelta=\arg\min_{\bDelta:\bDelta=\bDelta^\T,\bDelta+\bSigma^0\succeq \epsilon\bI}~\frac{1}2\|\bDelta+\bSigma^0-\hat\bSigma_n\|_F^2+\lambda|\bDelta+\bSigma^0|_1\quad
(\equiv F(\bDelta)).
\een
Note that it is easy to see that $\hat\bDelta=\hat\bSigma^+-\bSigma^0$.

Now we consider $\bDelta\in\{\bDelta:\bDelta=\bDelta^\T,\bDelta+\bSigma^0\succeq \epsilon\bI, \|\bDelta\|_F=5\lambda s^{1/2}\}$. Under the probability event $\{|\hat\sigma^n_{ij}-\sigma^0_{ij}|\le\lambda,~\forall (i,j)\}$, we have
\bean
F(\bDelta)-F(\bzero)
&=&
\frac{1}2\|\bDelta+\bSigma^0-\hat\bSigma_n\|_F^2-\frac{1}2\|\bSigma^0-\hat\bSigma_n\|_F^2
+\lambda|\bDelta+\bSigma^0|_1-\lambda|\bSigma^0|_1\\
&=&
\frac{1}2\|\bDelta\|_F^2+<\bDelta,\bSigma^0-\hat\bSigma_n> +
\lambda|\bDelta_{A_0^c}|_1 + \lambda(|\bDelta_{A_0}+\bSigma^0_{A_0}|_1-|\bSigma^0_{A_0}|_1)\\
&\ge&
\frac{1}2\|\bDelta\|_F^2 - \lambda {(|\bDelta|_1+\sum_i|\Delta_{ii}|)} + \lambda|\bDelta_{A_0^c}|_1 -\lambda|\bDelta_{A_0}|_1\\
&\ge&
\frac{1}2\|\bDelta\|_F^2 - 2\lambda{ (|\bDelta_{A_0}|_1+\sum_i|\Delta_{ii}|)}\\
&\ge&
\frac{1}2\|\bDelta\|_F^2 - 2\lambda { (s+p)}^{1/2}\|\bDelta\|_F\\
&\ge&
\frac{5}2\lambda^2 { (s+p)}\\
&>&
0
\eean

Note that $\hat\bDelta$ is also the optimal solution to the following convex optimization problem
\ben
\hat\bDelta=\arg\min_{\bDelta:\bDelta=\bDelta^\T,\bDelta+\bSigma^0\succeq \epsilon\bI}~F(\bDelta)-F(\bzero)\quad (\equiv G(\bDelta)).
\een
Under the same probability event, $\|\hat\bDelta\|_F\le 5\lambda { (s+p)}^{1/2}$ would always hold.
Otherwise, the fact that $G(\bDelta)>0$ for $\|\bDelta\|_F=5\lambda { (s+p)}^{1/2}$ should contradict with the convexity of $G(\cdot)$ and $G(\hat\bDelta)\le G(\bzero)=0$. Therefore, we can obtain the following probability bound
$$
\Pr(\|\hat\bSigma^+-\bSigma^0\|_F\le 5\lambda { (s+p)}^{1/2})
\ge
1- \Pr(\max_{i,j}|\hat\sigma^n_{ij}-\sigma^0_{ij}|>\lambda).
$$

Now we shall prove the probability bound under the exponential-tail condition. First it is easy to verify two simple inequalities that $1+u\le \exp(u)\le 1+u+\frac{1}2u^2\exp(|u|)$ and $v^2\exp(|v|)\le \exp(v^2+1)$. The first inequality can be proved by using the Taylor expansion, and the second one can be easily derived using the obvious facts that $\exp(v^2+1)\ge\exp(2|v|)$ and $\exp(|v|)\ge v^2$.

Let $t_0=(\eta\frac{\log p}n)^{1/2}$, $c_0=\frac{1}2eK_1\eta^{1/2}+\eta^{-1/2}(M+1)$ and $\varepsilon_0=c_0(\frac{\log p}n)^{1/2}$. For any $M>0$, we can apply the Markov inequality to obtain that
\bean
\Pr(\sum_{i}X_{ij}> n\varepsilon_0)
&\le&
\exp(-t_0n\varepsilon_0)\cdot\prod_{i=1}^nE[\exp(t_0X_{ij})]\\
&\le&
\exp(-t_0n\varepsilon_0)\cdot\prod_{i=1}^n \left\{1+\frac{t_0^2}2E[X_{ij}^2\exp(t_0|X_{ij}|)]\right\}\\
&\le&
p^{-c_0\eta^{1/2}}\cdot\exp(\frac{t_0^2}2\sum_{i=1}^nE[X_{ij}^2\exp(t_0|X_{ij}|)])\\
&\le&
p^{-c_0\eta^{1/2}}\cdot\exp(\frac{t_0^2}2\sum_{i=1}^nE[\exp(t_0^2X_{ij}^2+1)])\\
&\le&
p^{-c_0\eta^{1/2}}\cdot\exp(\frac{1}2eK_1\eta\log p)\quad (=p^{-M-1}),
\eean
where we apply $\exp(u)\le 1+u+\frac{1}2u^2\exp(|u|)$ in the second inequality and $1+u\le\exp(u)$ in the third inequality, and then use $v^2\exp(|v|)\le \exp(v^2+1)$ in the fourth inequality. Moreover, the simple facts that $E[X_{ij}]=0$ ($1\le i\le n$) and $t_0^2=\eta\frac{\log p}n\le\eta$ are also used.

Let $t_1=\frac{1}2\eta(\frac{\log p}n)^{1/2}$ and $c_1=2K_1(\eta^{-1}+\frac{1}4\eta\sigma_{\max}^2)\exp(\frac{1}2\eta\sigma_{\max})+2\eta^{-1}(M+2)$. Define $\varepsilon_1=c_1(\frac{\log p}n)^{1/2}$. For any $M>0$, we first apply the Cauchy inequality to obtain that
\bean
E[X_{ij}^2X_{ik}^2\cdot\exp(\frac{1}2\eta|X_{ij}X_{ik}|)]
&\le&
E[X_{ij}^2X_{ik}^2\cdot\exp(\frac{1}4\eta(X^2_{ij}+X^2_{ik}))]\\
&\le&
(E[X_{ij}^4\exp(\eta X^2_{ij}/2)])^{1/2}\cdot
(E[X_{ik}^4\exp(\eta X^2_{ik}/2)])^{1/2}\\
&\le&
4\eta^{-2}\cdot(E[\exp(\eta X^2_{ij})])^{1/2}
\cdot(E[\exp(\eta X^2_{ik})])^{1/2}\\
&\le&
4K_1\eta^{-2},
\eean
where we use the simple inequality $\exp(|v|)\ge v^2$ in the third inequality. Then, combining this result with the Cauchy inequality again yields that
\bean
&&E[(X_{ij}X_{ik}-\sigma^0_{jk})^2\cdot\exp(t_1|X_{ij}X_{ik}-\sigma^0_{jk}|)]\\
&\le&
2E[X_{ij}^2X_{ik}^2\cdot\exp(\frac{1}2\eta|X_{ij}X_{ik}-\sigma^0_{jk}|)]+
2(\sigma^0_{jk})^2\cdot E[\exp(\frac{1}2\eta|X_{ij}X_{ik}-\sigma^0_{jk}|)]\\
&\le&
8K_1\eta^{-2}\cdot\exp(\frac{1}2\eta\sigma^0_{jk})
+2(\sigma^0_{jk})^2\cdot\exp(\frac{1}2\eta\sigma^0_{jk})\cdot E[\exp(\frac{1}4\eta (X_{ij}^2+X_{ik}^2))]\\
&\le&
8K_1\eta^{-2}\cdot\exp(\frac{1}2\eta\sigma_{\max})+2\sigma_{\max}^2\cdot\exp(\frac{1}2\eta\sigma_{\max})
\bigl(E[\exp(\frac{1}2\eta X_{ij}^2)]\bigr)^{1/2}\cdot\bigl(E[\exp(\frac{1}2\eta X_{ik}^2)]\bigr)^{1/2}\\
&\le&
2K_1(4\eta^{-2}+\sigma_{\max}^2)\cdot\exp(\frac{1}2\eta\sigma_{\max})
\eean
where we use the fact that $t_1=\frac{1}2\eta(\frac{\log p}n)^{1/2}\le\frac{1}2\eta <\eta$ in the first inequality, and then use $|\sigma^0_{jk}|\le (\sigma^0_{jj}\sigma^0_{kk})^{1/2}\le\sigma_{\max}$ in the third inequality. Now, we can apply the Markov inequality to obtain the following probability bound
\bean
&&\Pr(\sum_{i}\{X_{ij}X_{ik}-\sigma^0_{jk}\}> n\varepsilon_1)\\
&\le&
\exp(-t_1n\varepsilon_1)\cdot
\prod_{i=1}^nE\left[\exp(t_1(X_{ij}X_{ik}-\sigma^0_{jk}))\right]\\
&\le&
p^{-\frac{1}2c_1\eta}\cdot\prod_{i=1}^n \left\{1+\frac{1}2t_1^2\cdot
E\left[(X_{ij}X_{ik}-\sigma^0_{jk})^2\cdot\exp(t_1|X_{ij}X_{ik}-\sigma^0_{jk}|)\right]\right\}\\
&\le&
p^{-\frac{1}2c_1\eta} \cdot \exp\left(\frac{1}2t_1^2\cdot \sum_{i=1}^nE\left[(X_{ij}X_{ik}-\sigma^0_{jk})^2\cdot\exp(t_1|X_{ij}X_{ik}-\sigma^0_{jk}|)\right]\right)\\
&\le&
p^{-\frac{1}2c_1\eta}\cdot
\exp\left(K_1(1+\frac{1}4\eta^2\sigma_{\max}^2)\cdot \exp(\frac{1}2\eta\sigma_{\max})\cdot\log p\right)\quad (=p^{-M-2}),
\eean
where we apply $\exp(u)\le 1+u+\frac{1}2u^2\exp(|u|)$ and $E[X_{ij}X_{ik}]=\sigma^0_{jk}$ for $i=1,2,\cdots,n$ in the second inequality, and we use $1+u\le\exp(u)$ in the third inequality.

Recall that $\lambda=c_0\frac{\log p}n+c_1(\frac{\log p}n)^{1/2}=\varepsilon^2_0+\varepsilon_1$ and \[
\hat\sigma^n_{jk}-\sigma^0_{jk}=(\frac{1}n\sum_{i}X_{ij}X_{ik}-\sigma^0_{jk})
-(\frac{1}n\sum_{i}X_{jk})\cdot(\frac{1}n\sum_{i}X_{ik}).
\]
Therefore, we can complete the probability bound under the exponential-tail condition as follows
\bean
\Pr(\max_{j,k}|\hat\sigma^n_{jk}-\sigma^0_{jk}|>\lambda)
&\le&
p^2\Pr(\sum_{i}X_{ij}X_{ik}> n(\sigma^0_{jk}+\varepsilon_1))+2p\Pr(\sum_{i}X_{ij}>n\varepsilon_0)\\
&\le&
3p^{-M}.
\eean

In the sequel we shall prove the probability bound under the polymonial-tail condition. First, we define $c_2=8(K_2+1)(M+1)$ and $\varepsilon_2=c_2(\frac{\log p}n)^{1/2}$. Define $\delta_n=n^{1/4}(\log n)^{-1/2}$, $Y_{ij}=X_{ij}I_{\{|X_{ij}|\le\delta_n\}}$ and $Z_{ij}=X_{ij}I_{\{|X_{ij}|>\delta_n\}}$. Then we have $X_{ij}=Y_{ij}+Z_{ij}$ and $E[X_{ij}]=E[Y_{ij}]+E[Z_{ij}]$. By construction, $|Y_{ij}|\le \delta_n$ are bounded random variables, and { $E[Z_{ij}]$ are bounded by $o(\varepsilon_2)$ due to the fact that
$
|E[Z_{ij}]|\le
\delta_n^{-3}E[{|X_{ij}|^4}I_{\{|X_{ij}|>\delta_n\}}]\le
K_2\delta_n^{-3}=o(\varepsilon_2).
$
} Now we can apply the Bernstein's inequality \citep{bernstein1946,bennett1962} to obtain that
\bean
\Pr(\sum_{i}\{Y_{ij}-E[Y_{ij}])\}> \frac{1}2n\varepsilon_2)
&\le&
\exp\left(\frac{-n\varepsilon_2^2}{8var(Y_{ij})+\frac{4}3\delta_n\varepsilon_2}\right)\\
&\le&
\exp\left(\frac{-c_2\log p}{8K_2+8+O(n^{-1/4})}\right)\\
&=&
O(p^{-M-1}),
\eean
where the fact that $var(Y_{ij})\le E[X^2_{ij}]\le E[X^2_{ij}I_{\{|X_{ij}|\ge 1\}}]+E[X^2_{ij}I_{\{|X_{ij}|\le 1\}}]\le K_2+1$ is used in the second inequality. Besides, we can apply the Markov inequality to obtain that
\[
\Pr(|X_{ij}|>\delta_n)
\le
\delta_n^{-4(1+\gamma+\varepsilon)}E\left[|X_{ij}|^{4(1+\gamma+\varepsilon)}\right]
\le
K_2(\log n)^{2(1+\gamma+\varepsilon)}n^{-1-\gamma-\varepsilon}.
\]
Then, we can derive the following probability bound
\bean
\Pr(\sum_{i}X_{ij}> n\varepsilon_2)
&=&
\Pr(\sum_{i}\{Y_{ij}+Z_{ij}-E[Y_{ij}+Z_{ij}]\}> n\varepsilon_2)\\
&\le&
\Pr(\sum_{i}\{Y_{ij}-E[Y_{ij}]\}> \frac{1}2n\varepsilon_2)+
\Pr(\sum_{i}\{Z_{ij}-E[Z_{ij}]\}> \frac{1}2n\varepsilon_2)\\
&\le&
{ O(p^{-M-1})+\Pr(\sum_{i}\{Z_{ij}-o(\varepsilon_2)\}> \frac{1}2n\varepsilon_2)}\\
&\le&
O(p^{-M-1})+\sum_{i}\Pr(|X_{ij}|>\delta_n)\\
&\le&
O(p^{-M-1})+K_2(\log n)^{2(1+\gamma+\varepsilon)}n^{-\gamma-\varepsilon}.
\eean

Let $c_3=8(K_2+1)(M+2)$ and $\varepsilon_3=c_3(\frac{\log p}n)^{1/2}$. Recall that $\delta_n=(\frac{n}{\log(n)})^{1/4}$, and define $R_{ijk}=X_{ij}X_{ik}I_{\{|X_{ij}|>\delta_n\textrm{~or~}|X_{ik}|>\delta_n\}}$. Then we have $X_{ij}X_{ik}=Y_{ij}Y_{ik}+R_{ijk}$ and $\sigma^0_{jk}=E[X_{ij}X_{ik}]=E[Y_{ij}Y_{ik}]+E[R_{ijk}]$. By construction, $|Y_{ij}Y_{ik}|\le \delta_n^2$ are bounded random variables, and { $E[R_{ijk}]$ is bounded by $o(\varepsilon_3)$ due to the fact that
\bean
|E[R_{ijk}]|&\le&
|E[X_{ij}X_{ik}I_{\{|X_{ij}|>\delta_n\}}]|+|E[X_{ij}X_{ik}I_{\{|X_{ik}|>\delta_n\}}]|\\
&\le&
\delta_n^{-2-4\gamma}E[X^{4(1+\gamma)}_{ij}I_{\{|X_{ij}|>\delta_n\}}]\cdot E[X^2_{ik}]+\delta_n^{-2-4\gamma}E[X^{4(1+\gamma)}_{ik}I_{\{|X_{ik}|>\delta_n\}}]\cdot E[X^2_{ij}]\\
&\le& 2 K_2\delta_n^{-2-4\gamma} \quad (=o(\varepsilon_3)).
\eean} Again, we can apply the Bernstein's inequality to obtain that
\bean
\Pr(\sum_{i}\{Y_{ij}Y_{ik}-E[Y_{ij}Y_{ik}]\}> \frac{1}2n\varepsilon_3)
&\le&
\exp\left(\frac{-n\varepsilon_3^2}{8K_2+8+\frac{4}3\delta_n^2\varepsilon_3}\right)\\
&\le&
\exp\left(\frac{-c_3\log p}{8K_2+8+O((\log n)^{-1/2})}\right)\\
&=&
O(p^{-M-2}),
\eean
where the fact that $var(Y_{ij}Y_{ik})\le E[X^2_{ij}X^2_{ik}]\le (E[X^4_{ij}]E[X^4_{ik}])^{1/2}\le K_2+1$ is used.
\bean
&&\Pr(\max_{j,k}|\sum_{i}(X_{ij}X_{ik}-\sigma^0_{jk})|> n\varepsilon_3)\\
&\le&
\Pr(\max_{j,k}|\sum_{i}\{Y_{ij}Y_{ik}-E[Y_{ij}Y_{ik}]\}|> \frac{1}2n\varepsilon_3)+\Pr(\max_{j,k}|\sum_{i}\{R_{ijk}-E[R_{ijk}]\}|> \frac{1}2n\varepsilon_3)\\
&\le&
{
2\sum_{j,k}\Pr(\sum_{i}\{Y_{ij}Y_{ik}-E[Y_{ij}Y_{ik}]\}> \frac{1}2n\varepsilon_3)
+\Pr(\max_{j,k}|\sum_{i}\{R_{ijk}-o(\varepsilon_3)\}|> \frac{1}2n\varepsilon_3)}\\
&\le&
O(p^{-M})+\sum_{i,j}\Pr(|X_{ij}|>\delta_n)\\
&\le&
O(p^{-M})+K_2p(\log n)^{2(1+\gamma+\varepsilon)}n^{-\gamma-\varepsilon}
\eean

Recall that $\lambda=c_2\frac{\log p}n+c_3(\frac{\log p}n)^{1/2}=\varepsilon^2_2+\varepsilon_3$. Therefore, we can prove the desired probability bound under the polynomial-tail condition as follows
\bean
&&\Pr(\max_{j,k}|\hat\sigma^n_{jk}-\sigma^0_{jk}|>\lambda)\\
&\le&
\Pr(\max_{j,k}|\sum_{i}\{X_{ij}X_{ik}-\sigma^0_{jk}\}|> n\varepsilon_3)
+\Pr(\max_{j}|\sum_{i}X_{ij}|> n\varepsilon_2)\\
&\le&
O(p^{-M})+3K_2p(\log n)^{2(1+\gamma+\varepsilon)}n^{-\gamma-\varepsilon}.
\eean
\end{proof}

\end{document}